\renewcommand\section{\@startsection{section}{1}{\z@}{-3.25ex plus -1ex minus -.2ex}{1.5ex plus .2ex}{\normalsize\bf}}
\renewcommand\subsection{\@startsection{subsection}{2}{\z@}{-3.25ex plus -1ex minus -.2ex}{1.5ex plus .2ex}{\normalsize\bf}}
\renewcommand\subsubsection{\@startsection{subsubsection}{3}{\z@}{-3.25ex plus -1ex minus -.2ex}{1.5ex plus .2ex}{\normalsize\bf}}
\numberwithin{equation}{section}
\newtheorem{thm}{Theorem}
\newtheorem{prop}[thm]{Proposition}
\newtheorem{fact}[thm]{Fact}
\newtheorem{conj}[thm]{Conjecture}
\begin{document}
\begin{frontmatter}
\title{Would Two Dimensions be World Enough for Spacetime?}
\author{Samuel C. Fletcher} \ead{scfletch@umn.edu}
\address{Department of Philosophy \\ University of Minnesota, Twin Cities}
\author{JB Manchak}\ead{jmanchak@uci.edu}
\author{Mike D. Schneider}\ead{mdschnei@uci.edu}
\author{James Owen Weatherall}\ead{weatherj@uci.edu}
\address{Department of Logic and Philosophy of Science\\ University of California, Irvine}

\begin{abstract}We consider various curious features of general relativity, and relativistic field theory, in two spacetime dimensions.  In particular, we discuss: the vanishing of the Einstein tensor; the failure of an initial-value formulation for vacuum spacetimes; the status of singularity theorems; the non-existence of a Newtonian limit; the status of the cosmological constant; and the character of matter fields, including perfect fluids and electromagnetic fields.  We conclude with a discussion of what constrains our understanding of physics in different dimensions.\end{abstract}

\end{frontmatter}

\section{Introduction}

Philosophers of physics---and conceptually-oriented mathematical physicists---have gained considerable insight into the foundations and interpretation of our best physical theories, including general relativity, non-relativistic quantum theory, and quantum field theory, by studying the relationships between these theories and other ``nearby'' theories.  For instance, one can better understand general relativity by studying its relationship to Newtonian gravitation, particularly in the form of geometrized Newtonian gravitation (i.e. Newton-Cartan theory);\footnote{For background on geometrized Newtonian gravitation, see \citet{Trautman} and (especially) \citet[Ch.~4]{MalamentGR}.  For projects that aim to use this theory to provide new insight into general relativty, see, for instance, \citet{Cartan1,Cartan2}, \citet{Friedrichs}, \citet{Friedman}, \citet{WeatherallSingularity, WeatherallSGP,WeatherallPuzzleball,WeatherallConservation}, \citet{Manchak+Weatherall}, \citet{Dewar+Weatherall}, and \citet{EhlersLimit2}.} or by considering its relationship to other relativistic theories of gravitation.\footnote{See, for instance, \citet{BrownPR}, \citet{KnoxFF,KnoxEmergence}, \citet{Pitts}, or \citet{WeatherallConservation}.}  Likewise, formulating classical mechanics in the language of Poisson manifolds provides important resources for understanding the structure of Hilbert space and quantum theory.\footnote{See, for instance, \citet{Weyl} and \citet{LandsmanMT,LandsmanBohr} for mathematical treatments of the main issues; for examples of how these ideas have been applied by philosophers, see, for instance, \citet{FeintzeigAlgebra} and \citet{FLRW}.}  And thinking about classical field theory using nets of *-algebras on spacetime can help us better understand quantum field theory.\footnote{See, for instance, \citet{Fredenhagen+Brunetti}, \citet{Rejzner}, and \citet{FeintzeigUI, FeintzeigPOs}.}

The key feature of projects of the sort just described is that they are comparative: one draws out features of one theory by considering the ways in which it differs from other theories.  But there is a closely allied project---or better, strategy for conceiving of projects---that, though often taken up by mathematical physicists, has received considerably less attention in the philosophy of physics literature.\footnote{To our knowledge, the projects that come closest to this strategy are those that evaluate arguments that spacetime must have a certain dimensionality \citep{CallenderDims}; or those that consider the details of constructive quantum field theory, which often considers lower-dimensional models \citep{RuetscheIQT, Hancox-Li}.}  This strategy is to study the foundations of a physical theory by considering features of that same theory in other dimensions.  Doing so can provide insight into questions concerning, for instance, whether inferences about the structure of the world that make use of the theory in fact follow from the theory itself, or if they depend on ancillary assumptions.  For instance, (vacuum) general relativity in four dimensions is, in a certain precise sense, deterministic.  But as we argue in what follows, this feature depends on dimensionality; in two dimensions the theory, at least on one understanding, does not have a well-posed initial value formulation.

A detailed study of the physics of different dimensions can also reveal striking disanalogies between physics in different dimensions, which can then inform other projects.  For instance, it is common in the mathematical physics literature to consider quantizing field theories---including general relativity---in lower dimensions.\footnote{See, for instance, \citet{Glimm+Jaffe}; for a discussion of quantum gravity in particular, see \citet{Carlip}.}  Doing so can provide important hints at what a full theory of quantum gravity, say, might look like.  Moreover, there is a temptation to try to draw preliminary philosophical morals about our own universe from these quantum theories in lower dimensions---particularly among philosophers who prefer to work with mathematically rigorous formulations of theories, which in the case of quantum field theories are only available in lower dimensions.  But there are also reasons to be cautious about such hints: if classical theories, including general relativity, have very different features in different dimensions, the inferences we can draw about their quantum counterparts in those other dimensions may not carry over to the four dimensional case.

In what follows, we investigate the features of general relativity in two spacetime dimensions, on several ways of understanding what that might mean.  In the first instance, we suppose that Einstein's equation holds in all dimensions.  As we will show, the resulting theory is strikingly different, in a number of important ways, from the standard four dimensional theory.  Of course, that theories can differ dramatically in different dimensions is hardly news---especially to the experts in mathematical physics who work on these theories in fewer (or more) than four dimensions---and it is well-known that general relativity in two dimensions is ``pathological'' or (arguably) ``trivial''.  But there are some features that we discuss below that, to our knowledge, have not been drawn out in detail in the literature---including, for instance, the status of the initial value formulation and the non-existence of a Newtonian limit (where Newtonian gravitation is generalized by assuming that the geometrized Poisson equation holds in all dimensions).  Moreover, in our view it is valuable to collect these features of the two-dimensional theory together in one place, and to reflect on what they can teach us about the structure of general relativity more generally.  They also raise the question of what it means to identify theories across dimensions, particularly when the ostensibly ``same'' theory can have very different qualitative features in different dimensions.

In the next section, we will discuss the status of the Einstein tensor---which vanishes identically in two dimensions---and Einstein's equation (without cosmological constant).  In a sense, this is the principal feature of two-dimensional general relativity from which the other strange features follow.  In the following section, we will discuss the status of the initial value formulation and singularity theorems in two dimensions.  Next we will consider Newtonian gravitation in two dimensions, generalized as noted above, and show that it is not the classical limit of general relativity.  In the following section, we will consider what happens when one includes a cosmological constant, exploring the consequences for the character of some matter fields in two dimensions.  We will then discuss what it means to generalize a theory to different dimensions, by considering various arguments about alternative formulations of the theory in two dimensions.  We conclude by arguing that the discussion here of how to generalize a theory to other dimensions raises questions for a common view according to which to interpret a physical theory is to characterize the space of possibilities allowed by that theory.

\section{Einstein's Tensor and Einstein's Equation}

We begin with a few preliminaries concerning the relevant background formalism of general relativity.\footnote{The reader is encouraged to consult \citet{Hawking+Ellis}, \citet{Wald}, and \citet{MalamentGR} for details.} An $n$-dimensional relativistic {\em spacetime} (for $n \geq 2$) is a pair $(M, g_{ab})$ where $M$ is a smooth, connected $n$-dimensional manifold and $g_{ab}$ is a smooth, non-degenerate, pseudo-Riemannian metric of Lorentz signature $(+,-,...,-)$ defined on $M$.\footnote{We also assume $M$ to be Hausdorff and paracompact.  All objects that are candidates to be smooth in what follows are assumed to be so, even when not mentioned explicitly.}

For each point $p \in M$, the metric assigns a cone structure to the tangent space $M_p$. Any tangent vector $\xi^a$ in $M_p$ will be {\em timelike} if $g_{ab}\xi^a\xi^b>0$, {\em null} if $g_{ab}\xi^a\xi^b=0$, or {\em spacelike} if $g_{ab}\xi^a\xi^b<0$. Null vectors delineate the cone structure; timelike vectors are inside the cone while spacelike vectors are outside. A {\em time orientable} spacetime is one that has a continuous timelike vector field on $M$.  A time orientable spacetime allows one to distinguish between the future and past lobes of the light cone. In what follows, it is assumed that spacetimes are time orientable and that an orientation has been chosen.

For some open (connected) interval $I\subseteq \mathbb{R}$, a smooth curve $\gamma: I\rightarrow M$ is {\em timelike} if the tangent vector $\xi^a$ at each point in $\gamma[I]$ is timelike. Similarly, a curve is {\em null} (respectively, {\em spacelike}) if its tangent vector at each point is null (respectively, spacelike). A curve is {\em causal} if its tangent vector at each point is either null or timelike. A causal curve is {\em future directed} if its tangent vector at each point falls in or on the future lobe of the light cone.  A curve $\gamma: I\rightarrow M$ in a spacetime $(M,g_{ab})$ is a {\em geodesic} if $\xi^a \nabla_a \xi^b=\textbf{0}$, where $\xi^a$ is the tangent vector to $\gamma$ and $\nabla_a$ is the unique derivative operator compatible with $g_{ab}$.

The fundamental dynamical principle of general relativity is known as \emph{Einstein's equation}. In four dimensions, Einstein's equation may be written, without cosmological constant, in natural units as
\begin{equation}\label{EinsteinEquation}
R_{ab}-\frac{1}{2}g_{ab}R =8\pi T_{ab}.
\end{equation}
Here $R_{ab} = R^n{}_{abn}$ is the Ricci tensor associated with $g_{ab}$ and $R=R^a{}_a$ is the curvature scalar.  The left-hand side of this equation is known as the \emph{Einstein tensor}, often written $G_{ab}$; the right-hand side is the sum of the energy-momentum tensors associated with all matter present in the universe and their interactions.

In the first instance, we generalize general relativity to other dimensions by taking this expression to relate curvature and energy-momentum in arbitrary dimensions.  (We will return to this proposal in sections \ref{sec:cosConst} and \ref{sec:alternatives} and consider other possibilities.)  In particular, define, in a spacetime of any dimension, the Einstein tensor to be $G_{ab}=R_{ab}-\frac{1}{2} g_{ab}R$.

We have the following immediate proposition.
\begin{prop}\label{EinsteinTensor}
Let $(M,g_{ab})$ be a two-dimensional spacetime.  Then $R_{ab}=\frac{1}{2}Rg_{ab}$ and $G_{ab}=\mathbf{0}$.
\end{prop}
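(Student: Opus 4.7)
The plan is to exploit the fact that, in two dimensions, the Riemann tensor is pointwise determined by a single scalar: its algebraic symmetries leave essentially no freedom.

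First, I would observe that $R_{abcd}$ is antisymmetric in its first pair of indices and in its last pair, and that when $\dim M = 2$, the space of $(0,2)$-tensors antisymmetric in both indices at any point $p \in M$ is one-dimensional. Consequently, the space of $(0,4)$-tensors antisymmetric in both $[ab]$ and $[cd]$ is $1\cdot 1 = 1$-dimensional at each point. Since the tensor $h_{abcd} := g_{ac}g_{bd} - g_{ad}g_{bc}$ lies in this space and is nonzero (as $g_{ab}$ is non-degenerate), there must be a scalar function $\alpha$ on $M$ with
$$R_{abcd} = \alpha\,(g_{ac}g_{bd} - g_{ad}g_{bc}).$$
One also checks by direct inspection that $h_{abcd}$ satisfies the remaining Riemann symmetries $h_{abcd} = h_{cdab}$ and $h_{a[bcd]} = 0$, so these place no further restrictions on $\alpha$.

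Next, I would compute the Ricci tensor by direct contraction of the displayed expression. The form of the right-hand side immediately forces $R_{ab}$ to be a scalar multiple of $g_{ab}$. Taking the further trace $R = g^{ab}R_{ab}$ then fixes the proportionality constant uniquely, yielding $R_{ab} = \tfrac{1}{2}R\,g_{ab}$. Substituting this into the definition of the Einstein tensor gives immediately
$$G_{ab} = R_{ab} - \tfrac{1}{2}g_{ab}R = \tfrac{1}{2}Rg_{ab} - \tfrac{1}{2}Rg_{ab} = \mathbf{0},$$
as claimed.

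The main obstacle is really the first step — justifying the rigid algebraic form of the Riemann tensor in two dimensions. Everything after that is mechanical index contraction. An alternative route would be to invoke the fact that the Weyl tensor vanishes identically in dimensions $n \leq 3$, so the Ricci decomposition of $R_{abcd}$ reduces to its scalar-curvature piece, which has precisely the form displayed above; but the direct dimension-counting argument sketched here seems the most economical.
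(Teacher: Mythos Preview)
Your proposal is correct and follows essentially the same approach as the paper: both exploit the antisymmetry $R_{abcd}=R_{[ab][cd]}$ together with the fact that the space of two-forms at a point of a two-manifold is one-dimensional, so $R_{abcd}$ is a scalar multiple of a fixed basis tensor, after which contraction is mechanical. The only cosmetic difference is the choice of basis element: the paper writes $R_{abcd}=f\epsilon_{ab}\epsilon_{cd}$ using the volume form (working locally if $M$ is non-orientable), whereas you use $g_{ac}g_{bd}-g_{ad}g_{bc}$; these differ only by a sign, and your choice has the mild advantage of not requiring any orientability caveat.
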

\begin{proof}Given a pseudo-Riemannian manifold of any dimension $n\geq2$, the Riemann tensor $R_{abcd}=g_{an}R^{n}{}_{bcd}$ is antisymmetric in the first two indices and in the last two indices: $R_{abcd}=R_{[ab][cd]}$.  It follows that $R_{abcd}$ can be written as a linear combination of outer products of two-forms.  But the space of two-forms on a two-dimensional manifold is one-dimensional, and so we have $R_{abcd}=f\epsilon_{ab}\epsilon_{cd}$, where $\epsilon_{ab}$ is either volume element on $M$ determined by $g_{ab}$.\footnote{Compare with \citet[p.~54]{Wald}.}  (Observe that $f$ is independent of the choice, since it is the square of the volume element that appears here; if $M$ is non-orientable, one can proceed locally.)  It follows, from standard identities concerning volume elements, that $R_{ab} = f\epsilon^n{}_a\epsilon_{bn} = fg_{ab}$, and thus $R=2f$.  Thus $G_{ab}=R_{ab}-\frac{1}{2}Rg_{ab}=fg_{ab}-\frac{1}{2}(2f)g_{ab}=\mathbf{0}$. \end{proof}

Let us now draw out some of the consequences of Prop.~\ref{EinsteinTensor}.  First, it follows, at least on this generalization of general relativity to other dimensions, that \emph{every} two-dimensional spacetime $(M,g_{ab})$ is a \emph{vacuum} solution to Einstein's equation.  It is tempting to conclude from this result that there can be no matter in an two-dimensional spacetime; in fact, this holds only if we assume that the total energy-momentum tensor vanishes only if all matter fields vanish, which is not necessarily true in two dimensions.  (We return to this point in section \ref{sec:cosConst}.)

Here is another consequence.  Recall that in four dimensions, vacuum spacetimes always have vanishing Ricci curvature ($R_{ab}=
\mathbf{0}$), but in general they have non-vanishing \emph{Weyl curvature}, which is defined by
\begin{equation}\label{WeylTensor}
C_{abcd}=R_{abcd}-\frac{2}{n-2}\left(g_{a[d}R_{c]b}+g_{b[c}R_{d]a}\right)-\frac{2}{(n-1)(n-2)}Rg_{a[c}g_{d]b}.
\end{equation}
But the situation in two dimensions is strikingly different.  In particular, Prop.~\ref{EinsteinTensor} immediately implies that being a vacuum solution in two dimensions does \emph{not} imply that a spacetime has vanishing Ricci curvature.  To the contrary, all solutions with non-vanishing Ricci curvature are vacuum solutions.\footnote{One can confirm that there exist two-dimensional spacetimes with non-vanishing Ricci curvature by direct computation; an example is offered in section \ref{sec:Newton}.}

Conversely, there is a sense in which vacuum solutions in two dimensions have vanishing Weyl curvature.  To make this idea precise requires some work.  The reason is that the definition of the Weyl tensor in Eq.~\eqref{WeylTensor} only makes sense in 3 or more dimensions.  Still, we can ask whether there is any tensor in two dimensions that might be a suitable analog for the Weyl tensor.  For instance, the Weyl tensor is defined as the ``trace-free'' part of the Riemann tensor, i.e., the trace-free tensor one gets by subtracting appropriate terms linear in the Ricci tensor and the scalar curvature.  To ask if a candidate Weyl tensor is available in two dimensions, then, we might consider tensors of the form
\[C_{abcd}=R_{abcd}-k\left(g_{a[d}R_{c]b}+g_{b[c}R_{d]a}\right)-\ell Rg_{a[c}g_{d]b},
\]
where $k$ and $\ell$ are unspecified constants.  We then ask: for which values of $k$ and $\ell$ is this quantity trace-free, in the sense that $C^n{}_{abn}=\mathbf{0}$?

The first thing to observe is that since $R_{ab}=\frac{1}{2}Rg_{ab}$, the terms proportional to $k$ and $\ell$ are in fact proportional to one another.  Thus, by redefining constants, it is sufficient to consider only tensors of the form
\[C_{abcd}=R_{abcd}-k Rg_{a[c}g_{d]b}.
\]
Taking the trace of both sides, we find that this will be trace-free just in case
\[
\mathbf{0} = R_{ab} + \frac{k}{2}R g_{ab} = \frac{R(1+k)}{2}g_{ab},
\]
i.e., if $k=-1$.  A short calculation similar to the proof of Prop.~\ref{EinsteinTensor} then yields that
\[
C_{abcd}=R_{abcd}+ Rg_{a[c}g_{d]b} = \frac{1}{2}R\epsilon_{ab}\epsilon_{cd} -\frac{1}{2}R\epsilon_{ab}\epsilon_{cd} = \mathbf{0}.
\]

We thus have a sense in which the natural candidate for a two-dimensional Weyl tensor is identically zero.\footnote{Other arguments are available to support the claim that any viable candidate for the Weyl tensor must vanish identically.  For instance, in the proof of Prop.~\ref{EinsteinTensor}, we show that $R^{a}{}_{bcd}$ is proportional to the scalar curvature, and thus vanishes if and only if its trace does.}  To check that this makes sense, recall that the vanishing of the Weyl tensor in four or more dimensions is associated with a spacetime being ``conformally flat'', in the sense that, at least locally, $g_{ab}=\Omega^2\eta_{ab}$, where $\eta_{ab}$ is flat.\footnote{In three dimensions, the Weyl tensor is defined, but vanishes identically; conformal flatness is equivalent to the vanishing of a different tensor, known as the Cotton tensor: $C_{abc} = \nabla_c R_{ab}-\nabla_b R_{ac} - \frac{1}{2(n-1)}(\nabla_b R_{ac}-\nabla_c R_{ab})$.  Observe that unlike the Weyl tensor, the expression for the Cotton tensor is well-defined in two dimensions, but it follows from Prop. \ref{EinsteinTensor} that it vanishes identically.  We are grateful to Brian Pitts for pointing out an error related to the Weyl and Cotton tensors in an earlier draft.}  It turns out that something closely related holds in two dimensions: there is no non-vanishing candidate for a Weyl tensor, and all two-dimensional spacetimes are (locally) conformally flat:\footnote{The ``local'' qualification is very important here, and often dropped in informal discussions.  For instance, it is a consequence of the Gauss-Bonnet theorem that there is no flat metric on the 2-sphere.}
\begin{fact}Let $(M,g_{ab})$ be a two dimensional spacetime.  Then in some neighborhood of every point, there exists a flat metric $\eta_{ab}$ such that $g_{ab}=\Omega^2\eta_{ab}$ in that neighborhood.\end{fact}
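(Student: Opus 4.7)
The plan is to construct local null coordinates in which $g_{ab}$ takes a manifestly conformally flat form, bypassing any appeal to a Weyl-type tensor. In two dimensions with Lorentz signature, the null cone at each point $p$ consists of exactly two distinct null lines in $M_p$. First I would extend the two future-directed null directions to smooth null vector fields $\ell^a$ and $n^a$ on a neighborhood of each point: start with a smooth future-directed unit timelike field $t^a$ (which exists locally by time orientability) together with a smooth unit spacelike field $s^a$ orthogonal to it (obtainable by orthonormalization in any local chart), then set $\ell^a = t^a + s^a$ and $n^a = t^a - s^a$.

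Next, I would integrate the two transverse one-dimensional distributions spanned by $\ell^a$ and $n^a$. Because any one-dimensional distribution is trivially integrable, each yields a local foliation by null curves. Labeling the leaves of the $n^a$-foliation by $u$ and the leaves of the $\ell^a$-foliation by $v$ produces a chart $(u,v)$ on some neighborhood $U'$ in which $\partial/\partial u$ is proportional to $\ell^a$ and $\partial/\partial v$ is proportional to $n^a$; transversality of the two null directions guarantees that this really is a coordinate system. In these coordinates, the null conditions force the diagonal components of $g_{ab}$ to vanish, so the coordinate expression of $g_{ab}$ is $2F(u,v)\, du\, dv$ for some smooth, nowhere-vanishing function $F$. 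The further change $t = (u+v)/\sqrt{2}$, $x = (u-v)/\sqrt{2}$ converts $2\,du\,dv$ into $dt^2 - dx^2$, so the tensor $\eta_{ab}$ with coordinate expression $2\,du\,dv$ is flat, and by construction $g_{ab} = F\,\eta_{ab}$.

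The main obstacle I anticipate is the sign of $F$: to write $g_{ab} = \Omega^2 \eta_{ab}$ with $\Omega$ smooth and real requires $F > 0$. I would handle this by noting that replacing $u$ with $-u$ flips the sign of $F$, so by an initial choice of orientation one can arrange $F > 0$ at a chosen base point; continuity and non-vanishing of $F$ then give $F > 0$ throughout a connected neighborhood, and setting $\Omega := \sqrt{F}$ completes the construction. Finally, consistent with the ``local'' qualifier and the Gauss--Bonnet footnote, I would flag that the argument is inherently local and that no global conformal flattening is being claimed.
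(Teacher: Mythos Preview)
The paper does not prove this statement; it is recorded as a \emph{Fact} without argument, accompanied only by a footnote stressing that the claim is local. There is therefore no proof in the paper to compare against.

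Your construction is correct and is the standard one in the Lorentzian setting: produce two transverse smooth null line fields from a local orthonormal frame, integrate each to a foliation, take leaf-labeling functions as null coordinates $(u,v)$, and read off $g=2F\,du\,dv$ from $g_{uu}=g_{vv}=0$. The one step that would merit an extra sentence in a careful write-up is the passage from ``two transverse one-dimensional foliations'' to ``$(u,v)$ is a chart'': the flow-box theorem supplies a smooth leaf-labeling function for each foliation separately, and linear independence of $\ell^a$ and $n^a$ then gives $du\wedge dv\neq 0$, so the pair really is a coordinate system. Your handling of the sign of $F$ and your remark tying the inherently local nature of the construction to the paper's Gauss--Bonnet footnote are both appropriate.
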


In sum, all two-dimensional relativistic spacetimes are vacuum solutions to Einstein's equation in two dimensions.  Moreover, these spacetimes may be Ricci curved, but are always conformally flat---even in the presence of non-trivial Riemann curvature.  This is precisely the opposite of what we are accustomed to in four dimensions.

\section{Determinism and Singularities}\label{sec:detSing}

We now turn to another, in many ways more substantial, set of consequences of Prop.~\ref{EinsteinTensor}.  Recall that in four dimensions, general relativity admits a well-posed initial value formulation for vacuum solutions.\footnote{The non-vacuum case depends crucially on the type of matter being considered \citep[pp.~266--267]{Wald}.} In other words, if appropriate initial data are specified, then there exists a unique vacuum solution that is the maximal evolution of that data. The upshot of this result is a clear sense in which ``Laplacian determinism holds'' in general relativity \citep[p.~188]{EarmanPD}.

To make this precise, we require some further preliminaries.  We say two spacetimes $(M, g_{ab})$ and $(M', g'_{ab})$ are {\em isometric} if there is a diffeomorphism $\varphi: M \rightarrow M'$ such that $\varphi_* (g_{ab})=g'_{ab}$. A spacetime $(M, g_{ab})$ is {\em extendible} if there exists a spacetime $(M', g'_{ab})$ and a (proper) isometric embedding $\varphi: M \rightarrow M'$ such that $\varphi(M) \subset M'$. Here, the spacetime $(M', g'_{ab})$ is an {\em extension} of $(M, g_{ab})$. A spacetime is {\em inextendible} if it has no extension.

We say a curve $\gamma: I\rightarrow M$ is not {\em maximal} if there is another curve $\gamma': I'\rightarrow M$ such that $I$ is a proper subset of $I'$ and $\gamma(s)=\gamma'(s)$ for all $s \in I$. A spacetime $(M, g_{ab})$ is {\em geodesically complete} if every maximal geodesic $\gamma: I \rightarrow M$ is such that $I=\mathbb{R}$. A spacetime is {\em geodesically incomplete} if it is not geodesically complete.

For any two points $p, q\in M$, we write $p \ll q$ if there exists a future-directed timelike curve from $p$ to $q$. We write $p<q$ if there exists a future-directed causal curve from $p$ to $q$. These relations allow us to define the {\em timelike and causal pasts and futures} of a point $p$: $I^-(p)=\{q:q \ll p\}$, $I^+(p)=\{q:p \ll q\}$, $J^-(p)=\{q:q<p\}$, and $J^+(p)=\{q:p<q\}$. Naturally, for any set $S \subseteq M$, define $J^+[S]$ to be the set $\bigcup \{J^+(x): x \in S\}$ and so on. A set $S \subset M$ is {\em achronal} if $S \cap I^-[S]=\varnothing$.

A point $p \in M$ is a {\em future endpoint} of a future-directed causal curve $\gamma: I \rightarrow M$ if, for every neighborhood $O$ of $p$, there exists a point $t_0 \in I$ such that $\gamma(t) \in O$ for all $t>t_0$. A {\em past endpoint} is defined similarly. A causal curve is {\em future inextendible} (respectively, {\em past inextendible}) if it has no future (respectively, past) endpoint. If an incomplete geodesic is timelike or null, there is a useful distinction one can introduce. We say that a future-directed causal geodesic $\gamma: I \rightarrow M$ without future endpoint is {\em future incomplete} if there is an $r \in \mathbb{R}$ such that $s<r$ for all $s \in I$. A {\em past incomplete} causal geodesic is defined analogously.

For any set $S \subseteq M$, we define the {\em past domain of dependence of $S$}, written $D^-(S)$, to be the set of points $p \in M$ such that every causal curve with past endpoint $p$ and no future endpoint intersects $S$. The {\em future domain of dependence of $S$}, written $D^+(S)$, is defined analogously. The entire {\em domain of dependence of $S$}, written $D(S)$, is just the set $D^-(S) \cup D^+(S)$. The {\em edge} of an achronal set $S \subset M$ is the collection of points $p\in S$ such that every open neighborhood $O$ of $p$ contains a point $q \in I^+(p)$, a point $r \in I^-(p)$, and a timelike curve from $r$ to $q$ which does not intersect $S$. A set $S \subset M$ is a {\em slice} if it is closed, achronal, and without edge. A spacetime $(M, g_{ab})$ which contains a slice $S$ such that $D(S)=M$ is said to be {\em globally hyperbolic} and the set $S$ is a {\em Cauchy surface}.

We define the {\em future Cauchy horizon} of $S$, denoted $H^+(S)$, as the set $\overline{D^+(S)}-I^-[D^+(S)]$. The {\em past Cauchy horizon} of $S$ is defined analogously. One can verify that $H^+(S)$ and $H^-(S)$ are closed and achronal. The {\em Cauchy horizon} of $S$, denoted $H(S)$, is the set $H^+(S) \cup H^-(S)$. We have $H(S)=\dot{D}(S)$ and therefore $H(S)$ is closed. Also, a non-empty, closed, achronal set $S$ is a Cauchy surface if and only if $H(S)=\varnothing$.

Now consider the triple $(\Sigma, h_{ab}, k_{ab})$. Here, $\Sigma$ is a connected manifold of dimension $n-1$, $h_{ab}$ is a Riemannian metric on $\Sigma$, and $k_{ab}$ is a symmetric field on $\Sigma$. Let $^{(n-1)}R$ be the scalar curvature of $h_{ab}$ and let $D_a$ be the unique derivative operator compatible with $h_{ab}$. We take $(\Sigma, h_{ab}, k_{ab})$ to be a (vacuum) {\em initial data set} if the following constraint equations are satisfied \citep[p.~259]{Wald}:

\begin{align*}
 ^{(n-1)}R-(k_a^{\ a})^2+k_{ab}k^{ab}&=0,\\
  D_bk_a^{\ b}-D_ak_b^{\ b}&=\textbf{0}.
\end{align*}

Let $(\Sigma, h_{ab}, k_{ab})$ be an initial data set. We call a spacetime $(M, g_{ab})$ a {\em maximal Cauchy development} of $(\Sigma, h_{ab}, k_{ab})$ if it has the following properties: (i) $(M, g_{ab})$ is a vacuum solution to Einstein's equation. (ii) $(M, g_{ab})$ is globally hyperbolic with Cauchy surface $\Sigma$. (iii) The induced metric and extrinsic curvature of $\Sigma$ are $h_{ab}$ and $k_{ab}$. (iv) Every other spacetime which satisfies (i)--(iii) can be mapped isometrically into a subset of $(M, g_{ab})$. Note that, by property (iv), a maximal Cauchy development of an initial data set $(\Sigma, h_{ab}, k_{ab})$ is unique. We can now state the following celebrated result. \\

\begin{prop}[\citet{CB+Geroch}] Let $(\Sigma, h_{ab}, k_{ab})$ be an initial data set with $\Sigma$ three-dimensional. There exists a maximal Cauchy development of $(\Sigma, h_{ab}, k_{ab})$.\end{prop}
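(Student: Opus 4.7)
The plan is to proceed in two stages: first establish local existence of a Cauchy development, then apply a Zorn-type argument to extract a maximal one. For local existence I would follow the original Choquet-Bruhat strategy of passing to \emph{harmonic coordinates}, i.e., coordinates satisfying $\Box_g x^\mu = 0$. In such coordinates the vacuum equation $R_{ab}=\mathbf{0}$ becomes a quasilinear, diagonal, second-order hyperbolic system for the components of $g_{ab}$ of the schematic form $g^{\mu\nu}\partial_\mu\partial_\nu g_{\alpha\beta} = F(g,\partial g)$. Standard energy-estimate / Picard iteration arguments in Sobolev spaces give a unique local-in-time solution from any sufficiently regular initial data. The role of the constraint equations is then to guarantee that the harmonicity condition on the coordinates, if imposed at $\Sigma$, is propagated by the evolution; this is what upgrades a solution of the reduced hyperbolic system to a genuine vacuum Einstein solution. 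The upshot is a globally hyperbolic vacuum development defined in some neighborhood of $\Sigma$ inducing $h_{ab}$ and $k_{ab}$.

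The second stage promotes this local development to a maximal one. Consider the collection $\mathcal{D}$ of (isometry classes of) Cauchy developments of $(\Sigma,h_{ab},k_{ab})$ satisfying (i)--(iii) in the definition, partially ordered by the relation ``embeds isometrically into, preserving $\Sigma$ and the induced data.'' One applies Zorn's lemma: given a chain in $\mathcal{D}$, one produces an upper bound by taking the disjoint union of its members and quotienting by the equivalence relation generated by the chain's embeddings. A maximal element is then a candidate for the maximal Cauchy development, with property (iv) following from maximality together with the gluing construction used to form upper bounds.

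The linchpin that makes both the gluing and the universal property work is a \emph{geometric uniqueness} lemma: if $(M_1,g_1)$ and $(M_2,g_2)$ are two Cauchy developments of the same data, then on any sufficiently small globally hyperbolic neighborhood of $\Sigma$ they are related by an isometry fixing $\Sigma$. This is proved by solving both in harmonic coordinates adapted to $\Sigma$ and invoking uniqueness for the reduced hyperbolic system, then transporting the resulting coordinate identification into a diffeomorphism between the spacetimes. With this in hand, one can show that given any two developments one can glue them along the union of all their common sub-developments.

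The main obstacle, and the technical heart of the Choquet-Bruhat--Geroch paper, is not the local PDE theory but showing that the Zorn upper bound and the glued maximal object are honest spacetimes, in particular \emph{Hausdorff}, connected, and themselves globally hyperbolic Cauchy developments. The danger is ``bifurcate'' pathologies where two sheets of the quotient touch without separating; ruling these out uses global hyperbolicity together with the geometric uniqueness lemma to force any two limiting points to actually coincide. Once Hausdorffness is secured, verifying that $\Sigma$ remains a Cauchy surface and that the result satisfies (i)--(iv) is comparatively routine, and the proposition follows.
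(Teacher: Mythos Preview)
The paper does not supply its own proof of this proposition; it is stated as a citation of \citet{CB+Geroch} and then immediately used as background for the subsequent two-dimensional non-existence result. So there is no ``paper's proof'' against which to compare your proposal.

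That said, your sketch is a faithful high-level outline of the original Choquet-Bruhat--Geroch argument: local existence via harmonic gauge reducing $R_{ab}=\mathbf{0}$ to a quasilinear hyperbolic system, propagation of the gauge condition from the constraints, a Zorn-type argument on the partially ordered set of developments, and the Hausdorffness of the glued maximal object as the principal technical obstacle. If you intend this as more than an outline, the places that would need actual work are (a) making precise the partial order and the equivalence relation in the quotient construction so that the upper bound is well-defined, and (b) the Hausdorffness argument, which in the original paper is handled by showing that the set of points where two developments agree is both open and closed in an appropriate sense. But as a summary of the strategy, your proposal is accurate; it simply goes well beyond what the paper itself undertakes.
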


So we have a clear sense in which the state of the universe at any particular ``time'' can be used to uniquely determine the state of the universe at all other ``times'' if attention is restricted to spacetimes that are four-dimensional vacuum solutions that are appropriately maximal. Now, given that \emph{every} two-dimensional spacetime is a vacuum solution, it should not be too surprising that the above proposition does not generalize. We have the following.\\

\begin{prop} Let $(\Sigma, h_{ab}, k_{ab})$ be an initial data set with $\Sigma$ one-dimensional. There is no maximal Cauchy development of $(\Sigma, h_{ab}, k_{ab})$.\end{prop}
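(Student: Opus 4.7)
The plan is to argue by contradiction, exploiting the fact that, by Prop.~\ref{EinsteinTensor}, every two-dimensional spacetime is automatically a vacuum solution and Einstein's equation imposes no constraint on how the initial data evolves off of $\Sigma$. Suppose $(M^*, g^*)$ were a maximal Cauchy development. I would construct two globally hyperbolic two-dimensional spacetimes $(M_1, g_1)$ and $(M_2, g_2)$, each satisfying (i)--(iii) for the given data $(\Sigma, h_{ab}, k_{ab})$, whose spacetime scalar curvatures disagree at some point of $\Sigma$. By (iv), both would have to embed isometrically into $(M^*, g^*)$ via maps identifying $\Sigma \subset M_i$ with $\Sigma \subset M^*$. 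Since scalar curvature is a pointwise isometric invariant, this forces $R(g^*)$ to take two distinct values at the same point, a contradiction.

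The explicit construction proceeds in Gaussian-normal-type coordinates near $\Sigma$, where any extension of the data locally takes the form $g = dt^2 - a(t,x)^2\, dx^2$ with $\Sigma = \{t = 0\}$; here $a(0,x)$ is fixed by $h_{ab}$ and $\partial_t a(0,x)$ is fixed by $k_{ab}$. The constraint equations collapse to trivial identities on a one-dimensional $\Sigma$ (both the Hamiltonian and momentum constraints involve only one-dimensional tensors and reduce to $0 = 0$), so the prescribed data places no further condition on $a$ off of $\Sigma$. A direct computation shows that $R(g)$ restricted to $t=0$ involves $\partial_t^2 a(0,x)$, which is entirely unconstrained. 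I would therefore select two smooth extensions $a_1, a_2$ that agree in $a$ and $\partial_t a$ at $t = 0$ but differ in $\partial_t^2 a$ at some $x_0 \in \Sigma$, and take $M_i$ to be a narrow product neighborhood $(-\epsilon, \epsilon) \times \Sigma$ endowed with the corresponding $g_i$. For small enough $\epsilon$, each $(M_i, g_i)$ is globally hyperbolic with $\Sigma$ as Cauchy surface and, by Prop.~\ref{EinsteinTensor}, a vacuum solution.

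The main obstacle is to ensure that the two constructed spacetimes really are legitimate candidates in the sense of (i)--(iii), and in particular that both induce exactly the prescribed initial data $(h_{ab}, k_{ab})$ on $\Sigma$; this is where the Gaussian-normal parametrization pays off, since $h$ and $k$ depend only on $a(0,\cdot)$ and $\partial_t a(0,\cdot)$ while the quantity I need to separate ($R$ on $\Sigma$) depends on the next derivative. A secondary subtlety is that clause (iv) should be read (as in the usual statement of the Choquet-Bruhat--Geroch theorem) so that the isometric embeddings carry the Cauchy surface of each $M_i$ onto $\Sigma \subset M^*$; without that convention one would instead need to argue that the second-fundamental-form data induced on the images of $\Sigma$ would also disagree, which follows by the same scalar-curvature computation.
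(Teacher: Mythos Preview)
Your argument is correct, but the paper proceeds by a different and somewhat slicker route. Rather than building two competing developments in Gaussian normal coordinates, the paper starts from the putative maximal development $(M,g_{ab})$ itself and perturbs it by a conformal factor supported away from $\Sigma$: one chooses an open set $O\subset M$ disjoint from $\Sigma$ and sets $g'_{ab}=\Omega^2 g_{ab}$ with $\Omega\equiv 1$ on $M\setminus O$ but $(M,g')$ not isometric to $(M,g)$. Because causal structure is conformally invariant, $(M,g')$ is still globally hyperbolic with the same Cauchy surface, and because the metrics agree near $\Sigma$ the induced data are identical; Prop.~\ref{EinsteinTensor} makes $(M,g')$ a vacuum solution automatically. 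Condition (iv) then forces $(M,g')$ to embed isometrically into $(M,g)$, which is impossible.

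The trade-offs: the paper's conformal argument avoids any coordinate computation and gets global hyperbolicity and the matching of induced data for free, since nothing is changed near $\Sigma$. Your construction is more hands-on but yields a very clean pointwise invariant (the scalar curvature on $\Sigma$ itself) that obstructs the embedding; this makes the contradiction especially transparent. Both arguments lean on the same convention you flag at the end---that the embeddings in clause (iv) identify the copies of $\Sigma$---and both ultimately exploit that in two dimensions the vacuum equation is vacuous, so there is no evolution equation to pin down the metric off the initial surface. One small refinement to your version: to avoid worrying about whether the embeddings fix $\Sigma$ pointwise or merely setwise (an isometry of a one-manifold can translate or reflect), it is cleanest to choose the two second derivatives $\partial_t^2 a_i(0,\cdot)$ to be distinct constants, so that $R(g_1)|_\Sigma$ and $R(g_2)|_\Sigma$ are distinct constants and no isometry of $\Sigma$ can reconcile them.
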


\begin{proof} Let $(\Sigma, h_{ab}, k_{ab})$ be an initial data set where $\Sigma$ is one-dimensional. Let us proceed indirectly: Suppose there exists a maximal Cauchy development of $(\Sigma, h_{ab}, k_{ab})$ and let this two-dimensional spacetime be $(M, g_{ab})$. Let $O \subset M$ be any open set which is disjoint from $\Sigma \subset M$. Consider the spacetime $(M, g'_{ab})$ where $g'_{ab}=\Omega^2g_{ab}$ and $\Omega: M \rightarrow \mathbb{R}$ is a smooth, strictly positive scalar function which is chosen so that $\Omega(p)=1$ for all $p \in M-O$ and the spacetimes $(M, g_{ab})$ and $(M,g'_{ab})$ are not isometric.

By Prop.~\ref{EinsteinTensor} above, $(M,g'_{ab})$ is a vacuum solution. Since $(M, g_{ab})$ is globally hyperbolic with Cauchy surface $\Sigma$, and is conformally related to $(M,g'_{ab})$, the latter spacetime is globally hyperbolic with Cauchy surface $\Sigma$ as well. Because $g_{ab}=g'_{ab}$ on $M-O$, the induced metric and extrinsic curvature of $\Sigma \subset M-O$ in the spacetime $(M,g'_{ab})$ are $h_{ab}$ and $k_{ab}$ respectively. Thus, by the definition of maximal Cauchy development, $(M,g'_{ab})$ can be isometrically embedded into a subset $(M, g_{ab})$. But this is impossible since $(M, g_{ab})$ and $(M,g'_{ab})$ are not isometric.  \end{proof}

All by itself, the non-existence of maximal Cauchy developments in two-dimensional general relativity marks another significant break from the usual four-dimensional case; there is a kind of breakdown of determinism here that is not present in four dimensions. But there is an interesting corollary one finds as well: without maximal Cauchy developments, one loses an important tool commonly used to distinguish between ``physically reasonable'' and ``physically unreasonable'' models of general relativity. Take, for example, the ``cosmic censorship conjecture'' \citep{Penrose} which is the idea that all ``physically reasonable'' spacetimes are free of the ``ghastly pathologies of naked singularities'' \citep[p.~66]{Earman}. The physical formulation of one influential version of the conjecture is this: ``All physically reasonable spacetimes are globally hyperbolic'' \citep[p.~304]{Wald}.

To express the statement more precisely, we require a further definition.  We will say that a spacetime $(M,g_{ab})$ is \emph{strongly causal} if for any point $p\in M$, and any neighborhood $O$ of $p$, there exists a neighborhood $V\subseteq O$ of $p$ such that no causal curve intersects $V$ more than once.  The cosmic censorship hypothesis may then be expressed as follows \citep{Geroch+Horowitz, Wald}.\\

\begin{conj} Let $(\Sigma, h_{ab}, k_{ab})$ be an initial data set with $\Sigma$ three-dimensional. If the maximal Cauchy development of this initial data is extendible, for each $p \in H^+(\Sigma)$ in any extension, either strong causality is violated at $p$ or $\overline{I^-(p) \cap \Sigma}$ is noncompact.\footnote{If $\overline{I^-(p) \cap \Sigma}$ is noncompact, then $\Sigma$ is a poor choice of initial data set (e.g. the spacelike hyperboloid contained in the causal past of a point in Minkowski spacetime). See \citet[p.~76]{Earman}.} \end{conj}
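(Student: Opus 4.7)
Strictly speaking, this is the cosmic censorship conjecture in the form due to \citet{Geroch+Horowitz}, and it remains one of the most famous open problems in mathematical relativity; I would not expect to settle it outright. One can however sketch the shape a proof ought to take. The plan is to argue by contradiction: assume an extension $(M',g'_{ab})$ of the maximal Cauchy development in which some point $p\in H^+(\Sigma)$ violates both disjuncts, so that strong causality holds at $p$ and $\overline{I^-(p)\cap\Sigma}$ is compact. The goal is to show that the initial data $(\Sigma,h_{ab},k_{ab})$ already determines the metric in a neighborhood of $p$ inside the original maximal development, contradicting its maximality.

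First I would exploit the compactness of $\overline{I^-(p)\cap\Sigma}$. Together with global hyperbolicity of the maximal development, this should bound the relevant energy quantities in the causal past of $p$ and prevent curvature blow-up along causal geodesics terminating at $p$. Next, strong causality at $p$ gives a convex normal neighborhood in which causal structure is locally Minkowskian; combined with the energy bounds, one would attempt to run the local well-posedness theory for the vacuum Einstein equations centered at $p$, producing an extension of the original Cauchy development containing a neighborhood of $p$. Such an extension would be inconsistent with $p$ lying on the Cauchy horizon of $\Sigma$.

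The principal obstacle---the one that has kept the problem open for over fifty years---is precisely this last step: controlling the behavior of the metric at the Cauchy horizon from past geometry alone. The Einstein equations are compatible with curvature invariants blowing up along causal geodesics of finite affine length in ways that obstruct smooth extension, and characterizing when such blow-up is ruled out by strong causality and compact past support is not understood in generality. Substantial progress exists in the spherically symmetric setting, particularly Christodoulou's work on the Einstein--scalar-field system, where trapped-surface and apparent-horizon techniques allow the relevant estimates to close. Accordingly, a realistic target would be a restricted version---either under additional symmetry, or for specific matter models where the analytical machinery is known to work---rather than the general vacuum statement as stated above.
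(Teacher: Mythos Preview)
Your assessment is correct: the statement is explicitly labeled a \emph{Conjecture} in the paper and no proof is offered or attempted there. The paper presents it only as one formulation of cosmic censorship, immediately observes that maximal Cauchy developments do not exist in two dimensions, and then passes to a different formulation (due to Earman) that does not presuppose their existence. The subsequent work in the paper is a two-dimensional \emph{counterexample} to that alternative formulation, not any argument toward the four-dimensional conjecture you were asked about.

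So there is nothing to compare your sketch against. Your recognition that the statement is open, and your identification of the principal analytic obstruction (controlling the metric at the Cauchy horizon from compactness and strong causality alone), are both appropriate. If anything, the paper's purpose in stating the conjecture is the opposite of seeking a proof: it is to motivate a dimension-independent reformulation and then to exhibit how badly that reformulation fails in two dimensions, thereby illustrating the difficulty of characterizing ``physically reasonable'' spacetimes there.
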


\noindent Given that maximal Cauchy developments do not exist in two dimensions, how might one express (a version of) the cosmic censorship conjecture in a general way?

Let $(K,g_{ab})$ be a globally hyperbolic spacetime. Let $\varphi: K \rightarrow K'$ be an isometric embedding into a spacetime $(K', g_{ab}')$. We say $(K', g'_{ab})$ is an  {\em effective extension} of $(K ,g_{ab})$ if, for some Cauchy surface $S$ in $(K,g_{ab})$, $\varphi[K]$ is a proper subset of $int(D(\varphi[S]))$ and $\varphi[S]$ is achronal. Hole-freeness can then be defined as follows.\footnote{See \citet{GerochPrediction} for an earlier definition and \label{ManchakHoleFree} for a discussion of why a revision was needed.} A spacetime $(M,g_{ab})$ is {\em hole-free} if, for every set $K\subseteq M$ such that $(K, g_{ab})$ is a globally hyperbolic spacetime with Cauchy surface $S$, if $(K', g_{ab})$ is not an effective extension of $(K,g_{ab})$ where $K'=int(D(S))$, then there is no effective extension of $(K,g_{ab})$.

With this background, \citet[pp.~75--98]{Earman} suggests the following formulation of the cosmic censorship hypothesis.\\

\begin{conj} Let $(M,g_{ab})$ be an inextendible, hole-free, vacuum solution. If $S \subset M$ is a slice and there exists a $p \in H^+(S)$, then either strong causality is violated at $p$ or $\overline{I^-(p) \cap S}$ is noncompact. \end{conj}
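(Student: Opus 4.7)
The plan is to disprove this reformulation of cosmic censorship in two dimensions by constructing an explicit counterexample: an inextendible, hole-free two-dimensional spacetime $(M,g_{ab})$ (automatically a vacuum solution by Proposition~\ref{EinsteinTensor}) containing a slice $S$ and a point $p \in H^+(S)$ at which strong causality holds while $\overline{I^-(p) \cap S}$ is compact. Since the vacuum condition is free in two dimensions, the entire task is causal-topological.

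First, I would arrange for $\overline{I^-(p) \cap S}$ to be compact by taking $S$ itself to be compact. In $\mathbb{R}^2$, every closed achronal set with more than one point has nontrivial edge unless it is a non-compact spacelike graph (the two ``endpoints'' of any compact segment belong to the edge), so compact slices require nontrivial spatial topology. I would therefore work on $M = \mathbb{R}\times S^1$, where a spacelike circle $S = \{t_0\}\times S^1$ is automatically closed, achronal, and edge-free. Next, to obtain nonempty $H^+(S)$, I would conformally rescale the flat product metric, $g_{ab}=\Omega^2\eta_{ab}$, choosing $\Omega$ to vanish along some future locus so that $D^+(S)\subsetneq M$; this step exploits precisely the conformal freedom behind the preceding proposition on the non-existence of maximal Cauchy developments. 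I would then tune $\Omega$ so that $(M,g_{ab})$ is inextendible (by ensuring affine parameters along null geodesics terminate as $\Omega\to 0$) and so that the generators of $H^+(S)$ escape to ``conformal infinity'' rather than closing up, keeping a candidate $p\in H^+(S)$ strongly causal.

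The main obstacle, and the step I expect to require the most care, is verifying hole-freeness. The very conformal flexibility that produces the Cauchy horizon also tends to yield effective extensions of globally hyperbolic subregions---precisely the mechanism that doomed maximal Cauchy developments in the preceding proof. I would need to show that no globally hyperbolic $K\subseteq M$ admits a nontrivial effective extension beyond $\mathrm{int}(D(S_K))$. If a purely conformal construction cannot be rendered hole-free, I would pivot to a more bespoke topological construction in the spirit of Manchak's analysis of hole-freeness (footnote~\ref{ManchakHoleFree})---for instance, gluing two half-cylinders along a compact spacelike locus so that the manifold topology itself blocks effective extensions, while a carefully chosen conformal factor places $p$ at a strongly causal horizon point with the compact past-trace on $S$ intact.
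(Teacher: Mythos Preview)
Your diagnosis is right that hole-freeness is the crux, but you miss the trick that dissolves it. The paper works on $M' = \mathbb{R}^2 \setminus \{q\}$ and chooses $\Omega \to \infty$ (not $\Omega \to 0$) as the missing point $q$ is approached. Blowing up the conformal factor renders $(M', \Omega^2\eta_{ab})$ \emph{geodesically complete}, and geodesic completeness immediately implies both inextendibility and hole-freeness \citep{Manchak2014}. This single move discharges the step you correctly flag as the main obstacle, with no delicate verification and no bespoke gluing required. By contrast, your strategy of letting $\Omega$ vanish along a future locus produces geodesic incompleteness, which is precisely what makes hole-freeness hard to establish; you have identified the difficulty your own construction creates without seeing that reversing the direction of the conformal degeneration removes it.

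Your detour through $\mathbb{R}\times S^1$ is also unnecessary. The conjecture asks only that $\overline{I^-(p)\cap S}$ be compact, not $S$ itself. In the paper's construction $S$ is a non-compact spacelike line in $\mathbb{R}^2$, but removing $q \in I^+[S]$ generates a Cauchy horizon, and for any $p \in H^+(S)$ the past of $p$ meets $S$ in a bounded interval whose closure is compact. Strong causality at $p$ is automatic: the conformal factor leaves the causal structure of Minkowski spacetime untouched, so the resulting spacetime inherits a global time function and is stably causal.
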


\noindent This latter conjecture is, as far as we know, still open when $(M,g_{ab})$ is four-dimensional. But in fact that it is false in the two-dimensional case.\\

\textbf{Example.} Let $(M,\eta_{ab})$ be two-dimensional Minkowski spacetime.\footnote{That is, we suppose $M$ is diffeomorphic to $\mathbb{R}^2$ and $\eta_{ab}$ is flat and geodesically complete.} Let $q$ be any point in $M$ and $S \subset M$ be any slice such that $q \in I^+[S]$. Let $M'=M-\{q\}$ and consider a smooth, strictly positive scalar field $\Omega: M' \rightarrow \mathbb{R}$ that approaches infinity as the ``missing point'' $q$ is approached. Now consider the spacetime $(M', g_{ab})$ where $g_{ab}=\Omega^2\eta_{ab}$. Because the conformal factor $\Omega$ blows up, it renders the spacetime $(M', g_{ab})$ geodesically complete. It is thus hole-free and inextendible \citep{Manchak2014}. By Prop.~\ref{EinsteinTensor} above, the spacetime is a vacuum solution.  Now consider any point $p \in H^+(S)$. We know that strong causality is not violated at $p$ since the spacetime is stably causal (because the spacetime admits a global time function; see \citet[p.~199]{Wald}) and that $\overline{I^-(p) \cap S}$ is compact. (See figure \ref{fig:JB}) \\

\begin{figure}[h]    \centering
   \includegraphics[width=3in]{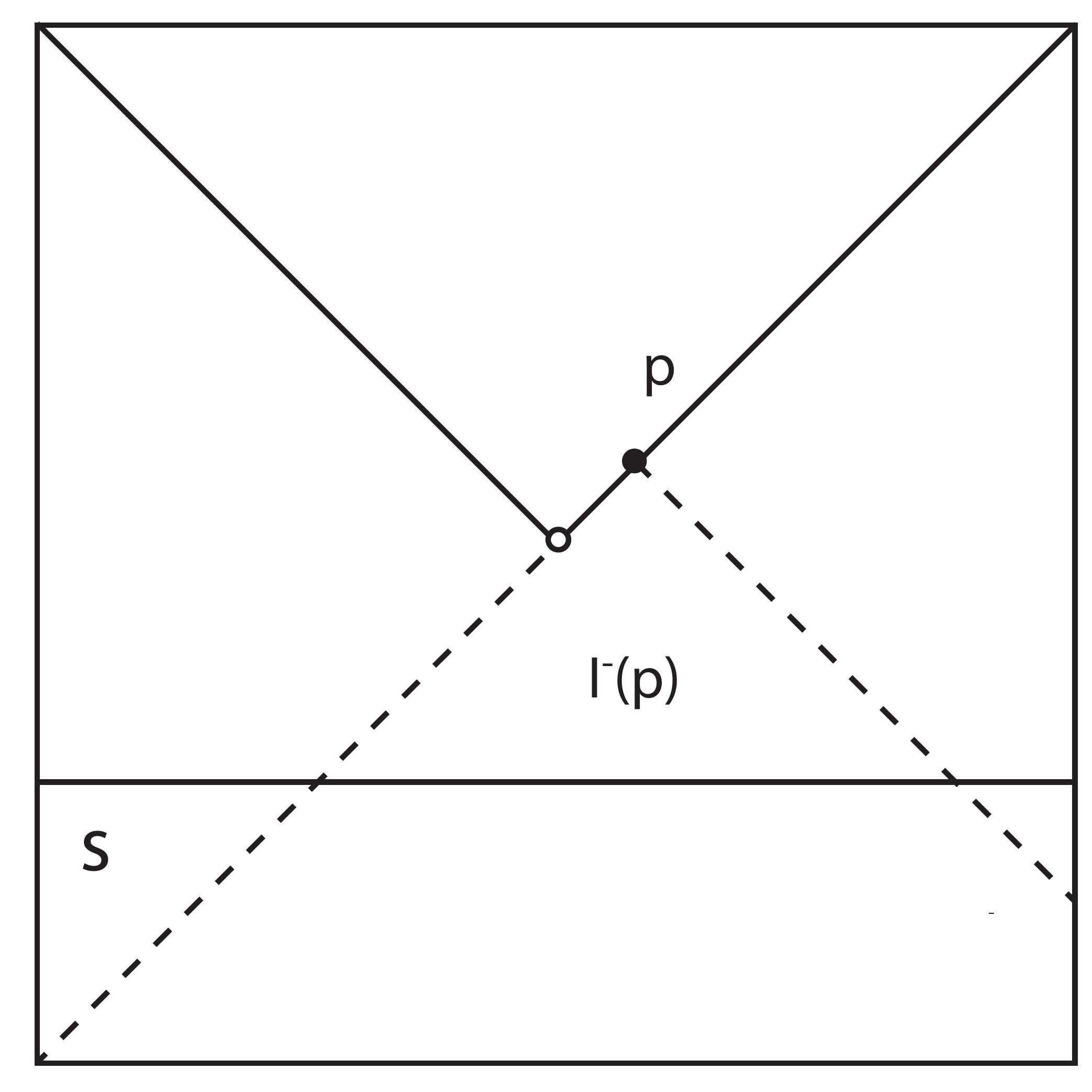}
   \caption{\label{fig:JB} The set $\overline{I^-(p) \cap S}$ is compact.}
\end{figure}

Does the above example show the two-dimensional version of the ``cosmic censorship conjecture'' to be false or does it merely suggest a variant formulation of the conjecture? This is not an easy question to pursue given the ``large and diverse class of ideas and motivations'' falling under the label ``cosmic censorship'' \citep[p.~99]{Earman}. One thing is clear, however: the example certainly demonstrates just how difficult it is to get a grip on any characterization of ``physically reasonable'' two-dimensional spacetimes.\footnote{These difficulties in two dimensions exacerbate the problem of characterizing ``physically reasonable'' spacetimes in any dimension. See \citet{Manchak2011} for details.} The construction of the example seems to be somewhat  ``artificial'' and yet it is about as locally well-behaved as one could demand---it is a vacuum solution, after all. Globally, the example also checks all the usual boxes required of ``physically reasonable'' spacetimes: it is stably causal, inextendible, and free of holes. Possibly the only count against the example is that it is geodesically complete while the singularity theorems of \citet{Hawking+Penrose} suggest that many ``physically reasonable'' spacetimes are geodesically incomplete. There are two separate responses to this line of reasoning, however.

First, one can easily consider a closely related example where the conformal factor $\Omega: M' \rightarrow \mathbb{R}$ goes to zero as the ``missing point'' $q$ is approached. The resulting spacetime $(M', \Omega^2\eta_{ab})$ is now geodesically incomplete but remains a counterexample to the conjecture.

The second response is of some independent interest: it is not clear that the singularity theorems are well-motivated in two dimensions. A crucial assumption for all of the major singularity theorems is the ``causal convergence condition,'' which is the requirement that $R_{ab}\xi^a\xi^b \geq 0$ for all causal vectors $\xi^a$ \citep{Hawking+Ellis, Senovilla}. Usually, this assumption is justified by mentioning that, in four dimensions, it is equivalent to the ``strong energy condition''---the requirement that $T_{ab}\xi^a\xi^b - \frac{1}{2}T \geq 0$ for all unit timelike vectors $\xi^a$.  It is often assumed that the strong energy condition is satisfied by all ``physically reasonable'' models of the universe \citep{Hawking+Ellis}.\footnote{Actually, there are good reasons to be skeptical that the strong energy condition is satisfied in all physically reasonable spacetimes; see \citet{CurielEC}.  Still, it is widely assumed and so its status is of interest here.}

We close this section by drawing attention to the fact that this equivalence between the conditions does not hold in two dimensions.

\begin{prop} In two dimensions, the causal convergence condition is not equivalent to the strong energy condition. \end{prop}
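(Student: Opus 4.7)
The plan is to exploit Proposition \ref{EinsteinTensor} twice, in two complementary directions, to decouple the two conditions entirely. First, since $G_{ab}=\mathbf{0}$ identically in two dimensions, Einstein's equation $G_{ab}=8\pi T_{ab}$ forces $T_{ab}=\mathbf{0}$, so $T=T^a{}_a=0$ as well. Consequently $T_{ab}\xi^a\xi^b-\tfrac{1}{2}T=0\geq0$ for every unit timelike $\xi^a$, and the strong energy condition is satisfied trivially and universally in two-dimensional relativistic spacetimes, regardless of the geometry.

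Second, I would use the identity $R_{ab}=\tfrac{1}{2}Rg_{ab}$, also established in the proof of Proposition \ref{EinsteinTensor}, to rewrite $R_{ab}\xi^a\xi^b=\tfrac{1}{2}R(g_{ab}\xi^a\xi^b)$. For null $\xi^a$ the right-hand side vanishes automatically; for timelike $\xi^a$, $g_{ab}\xi^a\xi^b>0$, so the causal convergence condition reduces to the pointwise requirement that $R\geq0$ on all of $M$.

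The final step is to exhibit a two-dimensional spacetime in which $R<0$ somewhere; such a spacetime then satisfies the strong energy condition (vacuously, by the first step) while violating the causal convergence condition (by the second step), proving non-equivalence. I would produce the example by conformal rescaling of two-dimensional Minkowski spacetime: set $g_{ab}=\Omega^2\eta_{ab}$ for a suitably chosen smooth, strictly positive $\Omega$, and use the standard formula for the scalar curvature under conformal rescaling in dimension two (which expresses $R$ in terms of a d'Alembertian of $\ln\Omega$) to make $R$ negative in some open region.

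The main obstacle is really only bookkeeping: one must check that the sign of the conformal-transformation formula in Lorentz signature comes out as claimed, and ensure that the example is actually a well-defined spacetime (smooth, time-orientable, etc.). No conceptual difficulty arises, because Proposition \ref{EinsteinTensor} already does all the heavy lifting, collapsing both conditions to statements that are trivial to compare.
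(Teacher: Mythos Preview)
Your proposal is correct and follows essentially the same approach as the paper: both use Proposition~\ref{EinsteinTensor} to force $T_{ab}=\mathbf{0}$ (so the strong energy condition holds trivially), and both construct a counterexample to the causal convergence condition via a conformal rescaling of two-dimensional Minkowski spacetime. The paper simply commits to the explicit choice $\Omega(t,x)=\exp(t^2)$ and computes $R_{ab}=-2\eta_{ab}$ directly, while your intermediate reduction of the causal convergence condition to the pointwise inequality $R\geq 0$ is a tidy clarification the paper leaves implicit.
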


\begin{proof} Let $(M,\eta_{ab})$ be two-dimensional Minkowski spacetime where $M=\mathbb{R}^2$ and $\eta_{ab}=\nabla_at\nabla_bt-\nabla_ax\nabla_bx$, for standard coordinates $(t,x)$. Let $\Omega: M \rightarrow \mathbb{R}$ be the function $\Omega(t,x)=\exp(t^2)$. Consider the spacetime $(M, g_{ab})$ where $g_{ab}=\Omega^2 \eta_{ab}$. Because $(M, g_{ab})$ is two-dimensional, by Prop.~\ref{EinsteinTensor} above, we know (via Einstein's equation) that its associated $T_{ab}$ is the zero tensor. This clearly implies that $(M, g_{ab})$ satisfies the strong energy condition. But, using the fact that $(M,\eta_{ab})$ is flat, one can verify \citep[p.~466]{Wald} that $R_{ab}=-2\eta_{ab}=-2\exp(-t^2)g_{ab}$ where $R_{ab}$ is the Ricci tensor associated with $(M, g_{ab})$. Thus, if $\xi^a$ is a unit timelike vector at the point $(0,0) \in M$, we find $R_{ab}\xi^a\xi^b$ there to be $-2$. Thus, $(M, g_{ab})$ fails to satisfy the causal convergence condition. \end{proof}

\section{No Newtonian Limit}\label{sec:Newton}

We now consider the relationship between general relativity and Newtonian gravitation in two dimensions in light of Prop.~\ref{EinsteinTensor}.  We begin by reviewing some facts about the sense in which Newtonian gravitation is a limit of general relativity in four dimensions, and then show that the same limit does not obtain in two dimensions.

Recall that in four dimensions, one can present Newtonian gravitation as a theory set in a \emph{classical spacetime},\footnote{For details on the formulations of Newtonian gravitation discussed here, see \citet{Trautman} and \citet[Ch.~4]{MalamentGR}.} which is a quadruple $(M,t_a,h^{ab},\nabla)$, where $M$ is a connected 4-manifold of events, $t_a$ is a non-vanishing one-form on $M$, $h^{ab}$ is a smooth symmetric tensor field such that for all one-forms $\tau_a$ on $M$, $h^{ab}\tau_b = \mathbf{0}$ iff $\tau_a=\alpha t_a$ for some smooth scalar field $\alpha$; and $\nabla$ is a covariant derivative operator on $M$.  We assume that $\nabla$ is compatible with $t_a$ and $h^{ab}$, in the sense that $\nabla_a t_b=\mathbf{0}$ and $\nabla_a h^{bc}=\mathbf{0}$.  We will say that a vector $\xi^a$ is \emph{timelike} if $\xi^at_a\neq 0$; otherwise it is \emph{spacelike}; a curve is timelike (resp.~spacelike) if its tangent field is, everywhere.

In this framework, ``ordinary'' Newtonian gravitation is a theory in which $\nabla$ is flat, and gravitational effects are described using a gravitational potential $\varphi$ satisfying Poisson's equation, $\nabla_a\nabla^a\varphi = 4\pi\rho$, where $\nabla^a\varphi = h^{an}\nabla_n\varphi$ and $\rho$ is a scalar field representing the mass density in spacetime.  Given a solution to this equation, small bodies will accelerate according to the law $\xi^n\nabla_n\xi^a = \nabla^a\varphi$, where $\xi^a$ is the unit tangent to the center-of-mass worldline of the body.  One can also formulate Newtonian gravitation as a ``geometrized'' theory, on which spacetime is curved, with Ricci curvature satisfying $R_{ab}=4\pi\rho t_at_b$; on this theory, small bodies follow timelike geodesics.

In four dimensions, there is a precise sense in which the geometrized theory may be understood as a ``classical limit'' of general relativity, i.e., a limit in which $c\rightarrow \infty$, where $c$ is the speed of light.\footnote{For details on the sense in which this limiting procedure captures the idea that the speed of light is diverging, see \citet{FletcherThesis}.  Observe that our claim is that any sequence of models of general relativity satisfying the conditions described in the main text converge to models of Newtonian gravitation, and not that any model of general relativity has a natural or unique classical analogue, nor that every model of Newtonian gravitation arises as the limit of a sequence of relativistic spacetimes.}  In particular, suppose we have, on a fixed manifold $M$, a one-parameter family $\lambda\mapsto g_{ab}(\lambda)$ of Lorentz-signature metrics, for $\lambda$ defined on $(0,k)\subseteq \mathbb{R}$ with $k\geq 0$, a closed one-form $t_a$, and a smooth tensor field $h^{ab}$, all satisfying the following two conditions:\footnote{In addition, we assume all of the one-parameter families we consider are differentiable, to all orders, in $\lambda$.}
\begin{align*}
\lim_{\lambda\rightarrow 0}g_{ab}(\lambda) &= t_a t_b,\\
\lim_{\lambda\rightarrow 0}\lambda g^{ab} &=-h^{ab},
\end{align*}
where the limit is taken in the so-called $C^{\infty}$ point-open topology.\footnote{Here we follow \citet{Kunzle} and \citet{MalamentLimit}, but one can consider limits taken in other topologies; see \citet{FletcherThesis} for a discussion; nothing in what follows turns on the difference.}  Then we have the following result.
\begin{thm}[\citet{MalamentLimit}]\label{MalamentLimit} Let $M$ be a 4-manifold and let $g_{ab}(\lambda)$ be a one-parameter family of Lorentzian metrics satisfying the following conditions.
\begin{enumerate}
\item The family $g_{ab}(\lambda)$ converges to fields $t_at_b$ and $h^{ab}$ as just described; and
\item The Einstein tensor $G^{ab}(\lambda)$ converges to some smooth field $T^{ab}$.\footnote{One might be surprised by this condition.  \citet{MalamentLimit} assumes that there is a one-parameter family of fields $T^{ab}(\lambda)$ that converge to some smooth field $T^{ab}$; and he assumes that Einstein's equation holds for each $\lambda$, in the form $\overset{\lambda}{R}_{ab}=8\pi(T_{ab}(\lambda)-\frac{1}{2}\overset{\lambda}{T}g_{ab}(\lambda))$, where $\overset{\lambda}{T}$ is the trace of $T^{ab}(\lambda)$ and $\overset{\lambda}{R}_{ab}$ is the Ricci tensor associated with $g_{ab}(\lambda)$.  But in four dimensions, these two conditions are equivalent to requiring that the Einstein tensors determined by $g_{ab}(\lambda)$ converge to some smooth field $T^{ab}$.  Still, one might think that we should begin with this alternative formulation of Einstein's equation, which is only equivalent to Eq. \eqref{EinsteinEquation} in four dimensions.  But as we describe in sections \ref{sec:AltEq} and \ref{sec:EmpiricalContent}, in two dimensions the alternative version of Einstein's equation that Malament considers decouples curvature from energy-momentum, and so there is no chance that it will yield Poisson's equation in a classical limit.}
\end{enumerate}
Then there exist on $M$ a derivative operator $\nabla$ and a smooth field $\rho$, such that
\begin{enumerate}
\item $\overset{\lambda}{\nabla}\rightarrow \nabla$ as $\lambda\rightarrow 0$;
\item $(M,t_a, h^{ab},\nabla)$ is a classical spacetime satisfying $R^a{}_b{}^c{}_d = R^c{}_d{}^a{}_c$;
\item $G_{ab}(\lambda)\rightarrow \rho t_a t_b$ as $\lambda\rightarrow 0$; and
\item $R_{ab} = 4\pi\rho t_a t_b$, where $R_{ab}$ is the Ricci tensor determined by $\nabla$.
\end{enumerate}\end{thm}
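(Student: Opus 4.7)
The plan is to build the classical limit in three stages: first, establish convergence of the derivative operators; second, pass to the limit in the curvature and Einstein's equation; third, verify the structural conditions for a classical spacetime and extract the geometrized Poisson equation.

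First I would fix an arbitrary smooth derivative operator $\nabla^0$ on $M$ and express each $\overset{\lambda}{\nabla}$ by its difference tensor $C^a{}_{bc}(\lambda)$ relative to $\nabla^0$. Metric compatibility of $\overset{\lambda}{\nabla}$ with $g_{ab}(\lambda)$ allows one to write $C^a{}_{bc}(\lambda)$ via a Koszul-type formula involving $g_{ab}(\lambda)$, $\nabla^0 g_{ab}(\lambda)$, and $g^{ab}(\lambda)$. Since $g^{ab}(\lambda) \sim -\lambda^{-1} h^{ab}$, the central technical obstacle is showing that the apparent $\lambda^{-1}$ singularities in this expression cancel as $\lambda \to 0$. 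This cancellation relies on the degeneracy structure of the limiting fields---in particular the annihilation property $h^{ab} t_b = \mathbf{0}$, which falls out of the two limiting conditions on $g_{ab}(\lambda)$ and $\lambda g^{ab}(\lambda)$---combined with higher-order information about the one-parameter family constrained by hypothesis (2). This is the hard part.

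Once $\overset{\lambda}{\nabla} \to \nabla$ is secured, compatibility $\nabla_a t_b = \mathbf{0}$ and $\nabla_a h^{bc} = \mathbf{0}$ follows by passing to the limit in the identities $\overset{\lambda}{\nabla}_a g_{bc}(\lambda) = \mathbf{0}$ and $\overset{\lambda}{\nabla}_a (\lambda g^{bc}(\lambda)) = \mathbf{0}$, respectively. Curvature depends continuously on the derivative operator in the $C^\infty$ topology, so $\overset{\lambda}{R}^a{}_{bcd}$ and $\overset{\lambda}{R}_{ab}$ converge to the corresponding curvature tensors of $\nabla$. The symmetry $R^a{}_b{}^c{}_d = R^c{}_d{}^a{}_b$ needed for $(M, t_a, h^{ab}, \nabla)$ to be a classical spacetime follows from the pairwise symmetry $\overset{\lambda}{R}_{abcd} = \overset{\lambda}{R}_{cdab}$ of the Lorentzian Riemann tensor upon raising the appropriate index pair with $\lambda g^{ab}(\lambda)$ and taking $\lambda \to 0$.

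Finally, rewrite Einstein's equation in trace-reversed form, $\overset{\lambda}{R}_{ab} = 8\pi \bigl(T_{ab}(\lambda) - \tfrac{1}{2} T(\lambda) g_{ab}(\lambda)\bigr)$, with $T_{ab}(\lambda)$ and $T(\lambda)$ obtained from $G^{ab}(\lambda) \to T^{ab}$ by lowering indices with $g_{ab}(\lambda)$. Tracking factors of $\lambda$ as indices are lowered and using $g_{ab}(\lambda) \to t_a t_b$, one shows that the only component of the limiting energy-momentum that survives is parallel to $t_a t_b$; this defines a scalar field $\rho$ for which $G_{ab}(\lambda) \to \rho t_a t_b$, giving item (3). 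The trace-reversed equation then collapses in the limit to $R_{ab} = 4\pi \rho t_a t_b$, yielding item (4). The coefficient $4\pi$ (rather than $8\pi$) is precisely what emerges from the trace manipulations in four dimensions---a point to which the paper will return when showing that the analogous calculation fails in two dimensions.
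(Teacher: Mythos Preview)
The paper does not supply its own proof of this theorem: it is stated as a result due to \citet{MalamentLimit} and cited without argument, so there is nothing in the paper to compare your proposal against. Your sketch is a reasonable outline of how Malament's argument proceeds---convergence of the connection coefficients, passage of curvature to the limit, and extraction of the geometrized Poisson equation from the trace-reversed Einstein equation---but the paper itself treats the theorem as a black box and moves directly to contrasting it with the two-dimensional situation.
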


We now consider the two-dimensional case. As we have characterized it here, one can certainly make sense of a notion of two-dimensional classical spacetime: one simply defines the structures described above on a 2-manifold of events rather than a 4-manifold.  (Note that in this case, no analogue to Prop.~\ref{EinsteinTensor} holds, because the derivative operator $\nabla$ is not determined by a non-degenerate metric.)  To be sure, Newtonian gravitational theory in this setting has some strange features: for instance, in the non-geometrized theory, the solution to a central force problem, for a point mass located at the origin (in an appropriate coordinate system $(t,x)$), yields a (discontinuous) gravitational potential
\[
\varphi = \begin{cases}-\frac{m}{2}x &\text{for } x >0,\\ \frac{m}{2}x &\text{for }x< 0,\end{cases}\]
whose associated gravitational field
\[
\nabla^a\varphi = \begin{cases}-\frac{m}{2}x^a &\text{for }x >0,\\ \frac{m}{2}x^a &\text{for }x< 0,\end{cases}
\]
is always attractive, but has magnitude independent of the distance from the central body.\footnote{This result is not so unexpected, perhaps: it arises for the same reason that in electrostatics (in three spatial dimensions), an infinite sheet with uniform charge density gives rise to an electric force that is independent of the distance from the sheet: it is a consequence of Gauss's law in one dimension.}
(Observe that this gravitational potential does not satisfy the boundary condition $\varphi\rightarrow 0$ as $|x| \to \infty$; moreover, $\varphi$ is generally unique only up to addition of a homogeneous solution to Poisson's equation, which in this setting takes the form $\varphi =A x$ for some constant $A$, and so the boundary condition cannot be realized for non-vanishing $\varphi$.) More generally, the force between two point masses $m$ and $m'$ will be given by $F^a = -mm'r^a$, where $r^a$ is the unit (spacelike) vector relating them.

But these strange features are not barriers to the theory being mathematically well-defined; moreover, general relativity is also strange in two (spacetime) dimensions.  The point we want to make now is that these two theories are strange in different, incompatible ways.  Perhaps most strikingly, in geometrized Newtonian gravitation in two dimensions, one \emph{can} have matter sources in the geometrized Poisson equation.  This is because the Ricci tensor associated with an arbitrary derivative operator compatible with classical metrics $t_a$ and $h^{ab}$ need not vanish---just as the Ricci tensor associated with an arbitrary Lorentzian metric need not vanish (even though the associated Einstein tensor always does vanish).

This difference between the theories leads us to the striking (further) observations that in two dimensions it is no longer the case that Newtonian gravitation is a ``classical limit'' of GR, at least not in the sense we described above in four dimensions.  To see why, suppose that we have a family of Lorentzian metrics $g_{ab}(\lambda)$ on a 2-manifold, converging to fields $t_at_b$ and $h^{ab}$ as above.  It immediately follows, from Prop.~\ref{EinsteinTensor}, that their associated Einstein tensors \emph{also} converge---and that they converge to $\mathbf{0}$.  But the Ricci tensor, $R_{ab}(\lambda)$ associated to each $g_{ab}(\lambda)$ need not vanish for any $\lambda$, and indeed, may not vanish in the limit---even when the derivative operators associated with $g_{ab}(\lambda)$ converge to a derivative operator $\nabla$ compatible with $t_a$ and $h^{ab}$.

To make this concrete, consider the manifold $\mathbb{R}^2$ with standard coordinates $t,x$.  In units where $c=G=1$, consider the following one-parameter family of metrics:
\[
g_{ab}(\lambda)=d_a t \; d_b t -\lambda(1+t^2)d_ax \; d_bx
\]
Clearly this family converges pointwise to $t_at_b$ as $\lambda\rightarrow 0$, where $t_a=d_a t$ is, by construction, a closed one-form; likewise we have $-\lambda g^{ab}(\lambda)\rightarrow h^{ab}=\frac{1}{1+t^2}\left(\frac{\partial}{\partial x}\right)^a\left(\frac{\partial}{\partial x}\right)^b$.  The derivative operator associated with $g_{ab}(\lambda)$ for each $\lambda$ may be written $\overset{\lambda}{\nabla}=(\partial,C^a{}_{bc}(\lambda))$, where $\partial$ is the coordinate derivative operator and $C^a{}_{bc}(\lambda) = \frac{t}{2}\lambda\left(\frac{\partial}{\partial t}\right)^ad_bx \; d_cx + \frac{2t}{1+t^2}\left(\frac{\partial}{\partial x}\right)^ad_{(b}x \; d_{c)}t$; this has associated Ricci tensor $R_{ab}(\lambda)=\frac{1}{2}Rg_{ab}(\lambda)$, where $R=-\frac{2}{(1+t^2)^2}$ has no dependence on $\lambda$.  One can easily confirm, then, that $\overset{\lambda}{\nabla}\rightarrow \nabla=(\partial,\frac{2t}{1+t^2}\left(\frac{\partial}{\partial x}\right)^ad_{(b}x \; d_{c)}t)$, satisfying $R^a{}_b{}^c{}_d=R^c{}_d{}^a{}_b$; and that $R_{ab}(\lambda)\rightarrow \frac{1}{2}Rt_at_b$.  It follows that $(\mathbb{R}^2,t_a,h^{ab},\nabla)$ is a classical spacetime, but with matter source $\rho=\frac{1}{8\pi}R\neq 0$, even though for every $\lambda>0$, $(M,g_{ab}(\lambda))$ is a vacuum spacetime.  It is in this sense that (geometrized) Newtonian gravitation is not the classical limit of general relativity in two dimensions: the limit does not preserve both sides of Einstein's equation.\footnote{In this example, the value of $\rho$ achieved in the limit is everywhere negative.  But this does not hold in all such examples---consider instead the family of metrics $g_{ab}(\lambda)=d_a t \; d_b t -t\; d_ax \; d_bx$ defined on $\mathbb{R}^2$ for $t>0$, for which we find, in the limit, $R=\frac{1}{2t^2}>0$, yielding positive $\rho$.}

In summary, we have proved the following result. (Compare with Theorem \ref{MalamentLimit}.)
\begin{prop} There exists a 2-manifold $M$ and a one-parameter family $g_{ab}(\lambda)$ of Lorentzian metrics on $M$ satisfying the following conditions:
\begin{enumerate}
\item The family $g_{ab}(\lambda)$ converges to fields $t_at_b$ and $h^{ab}$ as just described; and
\item The Einstein tensor $G^{ab}(\lambda)$ converges to some smooth field $T^{ab}$.
\end{enumerate}
Moreover, there exist on $M$ a derivative operator $\nabla$ and a smooth field $\rho$, such that
\begin{enumerate}
\item $\overset{\lambda}{\nabla}\rightarrow \nabla$ as $\lambda\rightarrow 0$;
\item $(M,t_a, h^{ab},\nabla)$ is a classical spacetime satisfying $R^a{}_b{}^c{}_d = R^c{}_d{}^a{}_b$; and
\item $R_{ab} = 4\pi\rho t_a t_b$, where $R_{ab}$ is the Ricci tensor determined by $\nabla$.
\end{enumerate}
But it is nonetheless \emph{not} the case that $G_{ab}(\lambda)\rightarrow \rho t_a t_b$ as $\lambda\rightarrow 0$.
\end{prop}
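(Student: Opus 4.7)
The plan is to exhibit a single explicit example --- essentially the one introduced in the preceding paragraph --- and verify that every item of the proposition follows from it. Take $M = \mathbb{R}^2$ with standard coordinates $(t,x)$ and the one-parameter family
\[
g_{ab}(\lambda) = d_a t\, d_b t - \lambda(1+t^2) d_a x\, d_b x.
\]
Condition (1) on convergence of the metric and its rescaled inverse is immediate: $g_{ab}(\lambda) \to t_a t_b$ with $t_a = d_a t$ closed, and since the family is diagonal, $-\lambda g^{ab}(\lambda) \to h^{ab} = (1+t^2)^{-1}(\partial/\partial x)^a (\partial/\partial x)^b$. Condition (2) is supplied for free by Prop.~\ref{EinsteinTensor}: $G^{ab}(\lambda) = \mathbf{0}$ identically, so the Einstein tensors converge trivially, to $T^{ab} = \mathbf{0}$. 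This same fact delivers the key negative conclusion at the end of the proposition, once we establish $\rho \neq 0$.

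Next I would extract the derivative operator by the standard Christoffel computation. The metric depends only on $t$ and is diagonal, so only two kinds of connection coefficient survive: one proportional to $\lambda t$, arising from $\partial_t g_{xx}$ raised by $g^{tt}$, and one proportional to $t/(1+t^2)$, arising from the same derivative raised by $g^{xx}$, where the factor of $\lambda$ cancels against $g^{xx} \sim 1/\lambda$. The first vanishes as $\lambda \to 0$, so $\overset{\lambda}{\nabla}$ converges to a limiting operator $\nabla$ whose connection retains only the $\lambda$-independent piece. Compatibility $\nabla_a t_b = \mathbf{0}$ and $\nabla_a h^{bc} = \mathbf{0}$ is then a short direct check, establishing that $(M, t_a, h^{ab}, \nabla)$ is a classical spacetime. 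The symmetry $R^a{}_b{}^c{}_d = R^c{}_d{}^a{}_b$ characteristic of a geometrized Newtonian connection follows from the fact that the only nonvanishing curvature components of $\nabla$ arise from a single connection coefficient acting in a single ``channel.''

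Finally I would identify the Newtonian source. By Prop.~\ref{EinsteinTensor}, $R_{ab}(\lambda) = \tfrac{1}{2} R(\lambda) g_{ab}(\lambda)$, and one computes the scalar curvature to be $R(\lambda) = -2/(1+t^2)^2$, which is $\lambda$-independent. Passing to the limit gives $R_{ab}(\lambda) \to \tfrac{1}{2} R\, t_a t_b$; setting $\rho = R/(8\pi)$ yields the geometrized Poisson relation $R_{ab} = 4\pi \rho t_a t_b$ for the Ricci tensor of $\nabla$. Since $\rho$ is nowhere zero but $G_{ab}(\lambda) \equiv \mathbf{0}$, we conclude $G_{ab}(\lambda) \to \mathbf{0} \neq \rho t_a t_b$, which completes the final clause of the proposition.

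The only real subtlety is the interplay between the two kinds of connection coefficient in the $\lambda \to 0$ limit. Because $g^{xx}(\lambda)$ blows up as $1/\lambda$ while $\partial_t g_{xx}(\lambda)$ vanishes as $\lambda$, their product is $\lambda$-independent and yields the nontrivial limiting Ricci tensor; the other coefficients carry pure $\lambda$-factors and die off. This cancellation is exactly what allows the limiting classical spacetime to carry genuine matter content, even though the two-dimensional relativistic predecessors all have identically vanishing Einstein tensor --- a phenomenon with no four-dimensional analogue, and the feature that forces the two sides of Einstein's equation to come apart in the limit.
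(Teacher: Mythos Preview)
Your proposal is correct and takes essentially the same approach as the paper: you use the identical one-parameter family on $\mathbb{R}^2$, invoke Prop.~\ref{EinsteinTensor} to get $G_{ab}(\lambda)\equiv\mathbf{0}$, compute the same two surviving Christoffel terms (one $O(\lambda)$, one $\lambda$-independent), obtain the $\lambda$-independent scalar curvature $R=-2/(1+t^2)^2$, and read off $\rho=R/(8\pi)\neq 0$. The only difference is presentational---you describe the connection computation qualitatively where the paper writes out the explicit $C^a{}_{bc}(\lambda)$---but the argument is the same.
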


\section{Non-Zero Cosmological Constant and Matter}\label{sec:cosConst}

Thus far, we have considered a generalization of general relativity to two dimensions based on the assumption that Eq.~\eqref{EinsteinEquation} holds.  We have thus ignored the possibility of a cosmological constant term $\Lambda g_{ab}$ appearing in Einstein's equation.  We now consider what happens if we do include this term.  In other words, we now suppose that the equation
\begin{equation}\label{EECC}
R_{ab}-\frac{1}{2}g_{ab}R - \Lambda g_{ab} =8\pi T_{ab}
\end{equation}
holds in two dimensions, where $T^{ab}$ is understood once again to be the total energy-momentum density associated with matter and $\Lambda\neq 0$ is taken to be a constant of nature---i.e., it is a quantity that takes the same (unspecified) value at every spacetime point, and in all models of the theory.\footnote{Here we follow \citet[p.~561]{EarmanCC}, who distinguishes ``two senses in which the cosmological constant can be a constant: the capital $\Lambda$ sense, according to which $\Lambda$ is a universal constant, and the lower case $\lambda$ sense, according to which $\lambda$ is the same throughout spacetime but can have different values in different universes'', and then argues that it is the capital $\Lambda$ sense that is taken for granted in standard approaches to deriving Eq.~\eqref{EECC} from an action principle---note however that many such approaches are inequivalent in two dimensions \citep{Deser1996}.  See also \citet{Bianchi+Rovelli}.  We do not take a stand on whether one \emph{should} think of $\Lambda$ in this way, i.e., as a constant of nature, or perhaps as something that can vary from model to model, as in unimodular gravity.}  (Observe the sign of the $\Lambda$ term in Eq.~\eqref{EECC}, which arises because of our metric signature convention.)

The first thing to observe about this case is that Prop.~\ref{EinsteinTensor} continues to hold.  But it no longer implies that all spacetimes are vacuum solutions; instead, Eq.~\eqref{EECC} simplifies in two dimensions to
\begin{equation}\label{EECC2}-\frac{\Lambda}{8\pi} g_{ab}=T_{ab}.\end{equation}
Thus, if $\Lambda\neq 0$ we have a (non-degenerate) field equation relating the spacetime metric to energy-momentum.

Since $\Lambda$ is a constant with some fixed value, Eq.~\eqref{EECC2} asserts that the energy-momentum tensor associated with matter is always some (fixed) multiple of the metric $g_{ab}$.  This in turn implies that $T^{ab}$ is (necessarily) non-vanishing, which means that \emph{no} spacetimes are vacuum spacetimes, and indeed, $T^{ab}$ is nowhere vanishing.  It follows from this observation that one can have matter, but not (isolated) bodies.  It also implies that $T^{ab}$ is always constant, since $\nabla_a(\Lambda g_{bc})=\mathbf{0}$; that $T=-\Lambda/4\pi$, where $T$ is the trace of $T^{ab}$; and that $T_{ab} = \frac{1}{2}Tg_{ab}$.

Given these strong constraints on energy-momentum in two dimensions, one might wonder if there are any candidate matter fields whose stress-energies can satisfy them.  Indeed, perhaps as one would expect, for some standard systems of equations, there are no non-trivial solutions compatible with Eq.~\eqref{EECC2} in two dimensions.  For instance, one can consider solutions to the Einstein-Klein-Gordon equations, where the Einstein equation is understood as Eq.~\eqref{EECC2} and the Klein-Gordon equation has the same form as in four dimensions:
\begin{equation}\label{KGE}
\nabla_a\nabla^a\varphi + m^2\varphi =0,
\end{equation}
with the associated energy-momentum tensor given by
\[
T^{ab} = \nabla^a\varphi\nabla^b\varphi - \frac{1}{2}g^{ab}(\nabla^n\varphi\nabla_n\varphi - m^2\varphi^2).
\]
Taking the trace of both sides yields $T = m^2\varphi^2$, which can be constant only if $\varphi$ is constant.  But if $\varphi$ is constant, then by Eq.~\eqref{KGE}, $m^2\varphi=0$, and thus either $m=0$ or $\varphi=0$.  In either case, we find $T=0$, which is incompatible with Eq.~\eqref{EECC2} for $\Lambda\neq 0$.  (On the other hand, it seems that the $m=0$ Klein-Gordon equation admits constant, non-vanishing solutions whose energy-momentum tensor vanishes identically, and which thus solve the Einstein-Klein-Gordon equations in two dimensions \emph{without} cosmological constant.)

Still, it turns out that at least some matter fields with non-vanishing energy-momentum tensors \emph{can} be defined in two dimensions. Consider the case of electromagnetism with a perfect fluid source, assuming that analogs of the Maxwell equations for the Faraday (field strength) tensor $F_{ab}$  and charge-current density $J^a$  in four dimensions hold in two:
\begin{align}
\nabla_{[a}F_{bc]} &= \mathbf{0},\\
\nabla_a F^{ab}& = J^b.
\end{align}
Further suppose that the energy-momentum tensor associated with $F_{ab}$ in two dimensions is also the analog of its expression in four, and similarly for the energy-momentum tensor associated with the charged fluid:
\begin{align}
\overset{EM}{T_{ab}} &= F_{am}F^m{}_b + \frac{1}{4} g_{ab} (F_{mn}F^{mn}),\\
\overset{PF}{T_{ab}} &= \rho \eta_a \eta_b - p (g_{ab} - \eta_a \eta_b),
\end{align}
where $\eta^a$ is the four-velocity field of the fluid and $\rho$ and $p$ are its scalar mass-energy density and pressure, respectively.  (We assume, for simplicity, that the four-velocity field $\eta^a$ is defined [and non-zero] everywhere.)

It will be convenient to express the metric and the Faraday tensor in terms of $\eta^a$.
In particular, at least locally we can express
\begin{equation}
g_{ab} = \eta_a \eta_b - \chi_a \chi_b
\end{equation}
for a unit spacelike field $\chi^a$  orthogonal to  $\eta^a$, which in two dimensions is unique up to  a choice of sign.
This determines a volume element
\begin{equation}
\epsilon_{ab} = 2\eta_{[a}\chi_{b]},
\end{equation}
which is the unique 2-form on the manifold up to a multiplicative scalar field. So, one can express the Faraday tensor $F_{ab}$, which is anti-symmetric, as
\begin{equation}
F_{ab} = f\epsilon_{ab} = 2f\eta_{[a}\chi_{b]},
\end{equation}
where $f$ is a scalar field on $M$. Note in particular that
\begin{align}
F_{am}F^m{}_b &=  (f\epsilon_{am})(f\epsilon^m{}_b) = f^2(\eta_a \chi_m - \chi_a \eta_m)(\eta^m \chi_b - \chi^m \eta_b) \nonumber\\
&= f^2 (\eta_a \eta_b - \chi_a \chi_b) = f^2g_{ab},\\
F_{mn}F^{mn} &= -2 f^2.
\end{align}
Using these two facts, we can express the total energy-momentum tensor $T_{ab} = \overset{EM}{T_{ab}} + \overset{PF}{T_{ab}}$ as
\begin{align*}
T_{ab} &= F_{am}F^m{}_b + \frac{1}{4} g_{ab} (F_{mn}F^{mn}) + \rho \eta_a \eta_b - p (g_{ab} - \eta_a \eta_b)\\
&= f^2 (\eta_a \eta_b - \chi_a \chi_b) - \frac{1}{2}f^2(\eta_a \eta_b - \chi_a \chi_b) + \rho \eta_a \eta_b + p \chi_a \chi_b \\
&= \left( \rho + \frac{1}{2} f^2 \right) \eta_a \eta_b  +\left( p - \frac{1}{2} f^2 \right) \chi_a \chi_b.
\end{align*}
Equating this with the expression for the energy-momentum given by Einstein's equation, $T_{ab} = -(\Lambda/8\pi)g_{ab} = -(\Lambda/8\pi)(\eta_a \eta_b - \chi_a \chi_b)$, yields the following two equations in three variables ($\rho,p,f^2$):
\begin{align}
\rho &= -\frac{1}{2}f^2 - \frac{\Lambda}{8\pi}, \\
p &= \frac{1}{2}f^2 + \frac{\Lambda}{8\pi}.
\end{align}
Thus the energy density and pressure of the fluid are critically balanced ($p = -\rho$) and are quadratic in the ``magnitude'' of the Faraday tensor, offset by the cosmological constant.

There are a few special cases to note.
\begin{description}
\item[$\Lambda=0$:] When the cosmological constant vanishes, so does the energy-momentum tensor. Contrary to expectation (cf. \citet{Collas}), it is possible to have non-vanishing matter fields even when the total energy-momentum tensor vanishes, because it is possible for the contributions from the electromagnetic field and the perfect fluid to cancel each other exactly.  That said, in this case we have negative mass density and $p=-\rho$, which implies that the energy-momentum tensor associated with the perfect fluid does not satisfy the weak energy condition.  (Compare this case with the discussion of scalar fields above.)
\item[$f^2=0$:] Without electromagnetic fields, the energy density and pressure must be everywhere constant and balanced exactly by the cosmological constant: $\rho = - p = -\Lambda/8\pi$, and $\rho>0$ implies that $\Lambda < 0$.
\item[$p=0$:] If pressure vanishes, we must have $f^2 = -\Lambda/4\pi$, which implies that $\rho=0$ (and $\Lambda < 0$).  Therefore dust (whether charged or not) is impossible in two dimensions.  In this case, since $f$ is constant, Maxwell's equations imply that $J^b = \mathbf{0}$, yet any observer with four-velocity $\xi^a$ at a point measures a constant electric field $E^a = F^a{}_b \xi^b = f\sigma^a$, where $\sigma^a$ is a spacelike unit vector orthogonal to $\xi^a$.
(The magnetic field is undefined.) In such a model, charged test particles are a bit like Rindler observers in special relativity: they accelerate at a constant rate forever.

\end{description}

This example shows that there are some cases in which matter fields may be defined in two dimensions, both when $\Lambda\neq 0$ and otherwise.  But as we will now argue, even if we do have matter, the relationship between matter and geometry in two dimensions is strikingly different from in four dimensions.  In particular, there is a sense in which  matter in two dimensions does not (necessarily) ``gravitate''.  To see this, fix any flat metric $\eta_{ab}$ on a two-dimensional manifold $M$ (admitting some flat metric).  It follows that this metric is a solution to Eq.~\eqref{EECC2}, for $T_{ab} = -(\Lambda/8\pi) \eta_{ab}\neq\mathbf{0}$.  Thus, the presence of matter does not imply that the spacetime has non-vanishing Ricci curvature or non-vanishing curvature scalar.  Contrast this result with the standard claim in four dimensions that, if $T^{ab}$ satisfies the strong energy condition (discussed in section \ref{sec:detSing}) then gravity is attractive, in the sense that nearby geodesics tend to converge. In two dimensions, $T^{ab}$ may satisfy any energy condition at all, while the ``geodesic deviation'' of the spacetime, measuring the degree to which nearby geodesics accelerate relative to one another, vanishes identically \citep[cf.][\S 2.7]{MalamentGR}.

Before moving on, we note that the discussion of this section, concerning matter in two dimensions when $\Lambda\neq 0$, may strike some readers as very strange.  After all, although matter may be defined, its dynamics is so constrained as to barely deserve the name.  Moreover, although matter is (necessarily) present, as a consequence of Eq.~\eqref{EECC2}, the relationship between matter and geometry is much weaker than in four dimensions.  One might conclude from this discussion that a non-zero cosmological constant in two dimensions is unphysical, and so one ought to conclude that $\Lambda$ must be $0$ in two dimensions. (Of course, one might equally argue that even the $\Lambda=0$ case is unphysical.  But it seems to us that a theory with a large number of vacuum solutions is physically significant in a way that a theory with no vacuum solutions and necessary, unphysical matter is not.)

But if one finds this argument compelling, it has the following consequence.  We arrived at Eq.~\eqref{EECC2} by supposing that the correct form of Einstein's equation, irrespective of dimension, is given by Eq.~\eqref{EECC}, for some (fixed) value of $\Lambda\neq 0$. But from this perspective, the assignment of a value to $\Lambda$ is independent of the choice of dimensionality of the spacetime manifold.  This means that, if $\Lambda$ must be $0$ in some subset of the permissible models in the theory, then $\Lambda$ must be $0$ in every case, independent of the dimensionality of any of the particular models.

Of course, there are ways to respond to this argument without accepting the conclusion.  For instance, one could deny the premise with which we began this section, that the cosmological constant should be taken to have the same value in all models of the theory.  Perhaps one could allow it to vary generally; or one could imagine it is the same in all models of a given dimension, but varies between dimensions.  One could also argue along the lines of what we will present in section \ref{sec:AltEq}, that one should not generalize general relativity to two dimensions by assuming Einstein's equation holds, with or without cosmological constant.  But any of these responses raises new questions concerning how we are to determine what can and cannot vary when one considers the space of possibilities according to a theory.  One principled answer, which we have adopted here for the sake of argument, is that constants of nature never vary across models; other answers are possible, but presumably require defense.

\section{On the Choice of Generalizing General Relativity via Einstein's Equation}\label{sec:alternatives}

The laundry list of unusual features of the version(s) of general relativity in two dimensions that we have been considering draw attention to the assumption with which we began, that Einstein's equation in two dimensions is the same as in four dimensions.  But should we accept that assumption?  Why should the two-dimensional equation have the same \textit{syntactic form} as the four-dimensional equation?  What kind of inductive evidence could we have to secure such an inference?

One line of evidence comes from various theoretical results establishing conditions under which the Einstein equation (or perhaps a class of equations including it) is uniquely determined as the field equation connecting geometry and matter in a spacetime theory.
We review the bearing of these on two-dimensional gravity in section \ref{sec:lovelock}.
These theorems have assumptions that could be questioned, of course,  so in section \ref{sec:AltEq} we describe how their conclusions could be and have been evaded.
Some of the alternative field equations proposed are justified in entirely different ways.
Finally, in section \ref{sec:EmpiricalContent}, we consider how any of these proposals could make contact with our empirical evidence for the four-dimensional theory, raising the possibility that what the ``correct'' two-dimensional theory is has no answer  and consequently that there is a plurality of two-dimensional theories that are viable in different contexts.

\subsection{Lovelock Variations}
\label{sec:lovelock}
Why ought the Einstein equation be the appropriate field equation for general relativity even in four dimensions?  There is a long history of attempts to provide an axiomatic or principled justification beyond Einstein's heuristic reasoning. These approaches have generally been founded on the following assumptions:\footnote{See \citet[\S17.5]{MTW} for a discussion of these approaches pre-Lovelock and some other, more heuristic approaches to determining the Einstein equation.}
\begin{enumerate}
\item  The field equation must take the form
\begin{equation}
\tilde{G}_{ab}(g) = T_{ab},
\label{eq:standard form}
\end{equation}
where $T_{ab}$ is the usual energy-momentum tensor field (although not necessarily assumed to be symmetric) and $\tilde{G}_{ab}$ is some (0,2)-tensor field whose value at a point depends only on the metric and its derivatives at that point.
\item The conservation condition holds, i.e., in light of the first condition, $\nabla_a\tilde{G}{}^{ab}(g)=\mathbf{0}$.
\end{enumerate}
The first condition ensures that the field equation is defined pointwise by the metric $g_{ab}$ and its derivatives, and that it equates the ``marble'' of geometry with the ``wood'' of matter.
The second condition demands that the conservation condition follows from the form of $\tilde{G}_{ab}$ alone.
Together, they do not imply that $\tilde{G}_{ab} \propto G_{ab} - \Lambda g_{ab}$, but with a few extra conditions, they do \citep{Vermeil1917,Cartan1922,Weyl22}:
\begin{enumerate}[resume]
\item $\tilde{G}_{ab}(g)$ depends only on the metric $g_{ab}$ and its first and second derivatives. \label{item:second-order}
\item $\tilde{G}_{ab}(g)$ is linear in the second derivatives of the metric $g_{ab}$. \label{item:linearity}
\item $\tilde{G}_{ab}(g)$ is symmetric. \label{item:symmetry}
\end{enumerate}
Note that there is no assumption concerning the dimensionality of spacetime.
One can maintain the same conclusion, however, by dropping conditions \ref{item:linearity} and \ref{item:symmetry} and adding such an assumption:
\begin{enumerate}[resume]
\item Spacetime is four-dimensional.
\end{enumerate}
This celebrated result, due to \citet{Lovelock71,Lovelock72}, is often quoted as one of the strongest foundations for the Einstein equation in four dimensions, yet it can be applied to two dimensions as well.
For, \citet{Lovelock71} provides a \textit{general} form for tensor fields $\tilde{G}_{ab}(g)$ satisfying conditions 1--\ref{item:second-order} and \ref{item:symmetry} that yields a unique answer in the two-dimensional case, too:\footnote{See also \citet{Navarro+Navarro} for a simplified proof connecting these results with the geometric concept of \textit{natural} tensor fields.  Although Lovelock's theorem places strong constraints on the form of gravitational field equations (satisfying the conditions described) in lower dimensions, it is more permissive in higher dimensions, which has led many theorists to focus on so-called "Lovelock theories" in higher dimensions, i.e., theories with field equations distinct from Einstein's equation but which satisfy the conditions of Lovelock's theorem.  Thus it is tempting to think that in higher dimensions, but not lower dimensions, one can find various theories with similar syntactic and semantic features to general relativity in four dimensions, and that this leads to an important distinction between the two cases.  But the arguments we give in section \ref{sec:AltEq} suggest that it is too fast, because insofar as one can question the assumptions of Lovelock's theorem in lower dimensions, one can also question them in higher dimensions.  It seems to us, then, that more needs to be said even in the higher dimensional case to justify a particular choice of generalization of Einstein's equation.}
\begin{equation}
\tilde{G}_{ab}(g) = Ag_{ab},
\label{eq:2D form}
\end{equation}
for some constant $A \in \mathbb{R}$.
Recently \citet{Navarro14} has shown how to achieve the same conclusions without condition \ref{item:symmetry}, so we may state the most general conclusion about two-dimensional general relativity thus:
\begin{prop}
Any divergence-free tensor field $\tilde{G}_{ab}$ in two dimensions naturally definable pointwise from the metric $g_{ab}$ and its first and second derivatives must take the form of Eq.~\eqref{eq:2D form}.
\end{prop}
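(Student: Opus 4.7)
The plan is to reduce the statement to the general, dimension-independent classification theorem of Lovelock together with its symmetry-free refinement by Navarro, and then to use Proposition~\ref{EinsteinTensor} to annihilate all but the $g_{ab}$ term.

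First, I would invoke Lovelock's classification in arbitrary dimension $n$: any symmetric, divergence-free $(0,2)$-tensor field $\tilde{G}_{ab}$ built naturally and pointwise from the metric and its first two derivatives is a real linear combination
\begin{equation*}
\tilde{G}_{ab} \;=\; \sum_{k=0}^{\lfloor n/2\rfloor} c_k\, L^{(k)}_{ab}
\end{equation*}
of the Lovelock tensors, where $L^{(0)}_{ab}=g_{ab}$, $L^{(1)}_{ab}=G_{ab}$, and for $k\ge 2$ the tensor $L^{(k)}_{ab}$ is a certain totally antisymmetric contraction of $k$ copies of the Riemann tensor. Crucially, every $L^{(k)}_{ab}$ with $2k>n$ vanishes identically, so the sum is finite and controlled by the dimension alone.

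Second, I would specialize to $n=2$. Only the values $k=0$ and $k=1$ are compatible with $2k\le n$, so the classification immediately reduces to $\tilde{G}_{ab} = A\,g_{ab} + B\,G_{ab}$ for some constants $A,B\in\mathbb{R}$. At this point Proposition~\ref{EinsteinTensor} does the remaining work: $G_{ab}=\mathbf{0}$ identically on every two-dimensional pseudo-Riemannian manifold, so the $B$-term vanishes and we obtain $\tilde{G}_{ab} = A\,g_{ab}$, which is Eq.~\eqref{eq:2D form}.

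Third, to drop the symmetry hypothesis implicit in Lovelock's original statement, I would appeal to \citet{Navarro14}. In two dimensions this step has an especially simple flavor, since the antisymmetric part of any $(0,2)$-tensor is pointwise a scalar multiple of the volume element $\epsilon_{ab}$; naturality, properly formulated to include invariance under orientation-reversing isometries of $g_{ab}$, forces that scalar to vanish, so the analysis genuinely reduces to the symmetric case already treated. The main obstacle is simply having the general Lovelock--Navarro classification at one's disposal; there is no genuinely two-dimensional computation beyond observing that $n=2$ lies below the threshold where higher Lovelock tensors become nontrivial, and that the one remaining non-$g_{ab}$ candidate, $G_{ab}$, is killed by Proposition~\ref{EinsteinTensor}.
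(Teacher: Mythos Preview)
Your proposal is correct and follows essentially the same route as the paper: the paper does not give a self-contained proof but simply presents the proposition as the two-dimensional specialization of Lovelock's general classification \citep{Lovelock71}, together with the observation that \citet{Navarro14} allows one to drop the symmetry hypothesis. Your write-up makes the specialization more explicit---writing out the Lovelock sum, truncating at $k\le 1$, and then invoking Proposition~\ref{EinsteinTensor} to kill the $G_{ab}$ term---but this last step is already implicit in Lovelock's framework (the $2k=n$ term is the topological case and contributes a vanishing tensor), so the extra appeal to Proposition~\ref{EinsteinTensor} is a harmless redundancy rather than a genuinely different ingredient.
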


A variation on Lovelock's approach, first described by \citet{Aldersley} and elaborated by \citet{Navarro+Sancho}, drops condition \ref{item:second-order} for the following ``dimensional analysis'' condition:
\begin{enumerate}[resume]
\item $\tilde{G}_{ab}(g)$ is \textit{independent of the unit of scale}, i.e., for any $\lambda > 0$, $\tilde{G}_{ab}(\lambda^2 g) = \tilde{G}_{ab}(g)$.
\label{item:scale-free}
\end{enumerate}
In other words, condition \ref{item:scale-free} states the invariance of $\tilde{G}_{ab}$ under homothetic transformations of the metric.
In fact, this assumption proves a conclusion somewhat stronger:
\begin{prop}
Any divergence-free tensor field $\tilde{G}_{ab}$ in two dimensions naturally definable pointwise from the metric $g_{ab}$ and its derivatives must take the form $\tilde{G}_{ab} = A G_{ab} (=\mathbf{0})$.
\end{prop}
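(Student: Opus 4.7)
The plan is to adapt the general Lovelock/Aldersley/Navarro--Sancho classification to two dimensions and then apply Prop.~\ref{EinsteinTensor} to collapse the result.

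First, I would invoke the version of Lovelock's theorem in which the second-order restriction (condition~\ref{item:second-order}) is dropped in favor of homothetic invariance (condition~\ref{item:scale-free}): in any dimension $n$, a natural, divergence-free $(0,2)$-tensor field (symmetric or not, in light of Navarro's refinement) that is scale invariant under $g_{ab} \mapsto \lambda^2 g_{ab}$ must, by the work of Aldersley and Navarro--Sancho, lie in a finite-dimensional space spanned by divergence-free, weight-zero curvature polynomials. In $n = 2$, I would then use the identity $R_{abcd} = \tfrac{1}{2} R \,\epsilon_{ab}\epsilon_{cd}$ established in the proof of Prop.~\ref{EinsteinTensor} to show that every natural curvature polynomial in two dimensions reduces to some function of $R$ and its iterated covariant derivatives times products of the metric, so the candidate space is especially easy to enumerate.

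Second, I would carefully track the scaling weights. Under the homothety $g_{ab} \mapsto \lambda^2 g_{ab}$, the Christoffel symbols and the derivative operator $\nabla_a$ are invariant, so $R^a{}_{bcd}$ and $R_{ab}$ are invariant, while $R = g^{ab}R_{ab} \mapsto \lambda^{-2} R$, $g_{ab} \mapsto \lambda^2 g_{ab}$, and hence $G_{ab} = R_{ab} - \tfrac{1}{2} R g_{ab}$ is invariant. A pure metric term $A g_{ab}$ thus scales as $\lambda^{2}$, so scale invariance forces its coefficient to vanish---this is precisely what kills the cosmological-constant term that survived in the previous proposition. A weight count combined with the Bianchi-identity constraints implied by divergence-freeness then shows that no scalar combination of $R$ and $\nabla^k R$ multiplying $g_{ab}$ can simultaneously carry the right weight and be divergence free, leaving only a multiple $A G_{ab}$ of the Einstein tensor.

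Third, by Prop.~\ref{EinsteinTensor}, $G_{ab} = \mathbf{0}$ in two dimensions, so $\tilde{G}_{ab} = A G_{ab} = \mathbf{0}$, as claimed.

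The main obstacle lies in the first step: rigorously establishing the classification with no bound on derivative order, so that one genuinely rules out exotic scale-invariant, divergence-free combinations of higher covariant derivatives of curvature. This is exactly the content of the cited results of Aldersley and Navarro--Sancho; once that general-dimension classification is in hand, specializing it to $n = 2$ via Prop.~\ref{EinsteinTensor} and the weight count is essentially bookkeeping.
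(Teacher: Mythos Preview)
Your proposal is correct and aligns with the paper's treatment: the paper does not supply an independent proof of this proposition but states it as the two-dimensional specialization of the Aldersley/Navarro--Sancho result obtained by replacing the second-order hypothesis with scale invariance (condition~\ref{item:scale-free}), and you have correctly unpacked how that specialization goes---in particular, that homothetic invariance kills the $Ag_{ab}$ term that survived in the earlier proposition, leaving only $A G_{ab}$, which vanishes by Prop.~\ref{EinsteinTensor}. Your acknowledgment that the genuine work lies in the cited classification, with the two-dimensional reduction being bookkeeping, is exactly the attitude the paper takes.
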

Thus, condition \ref{item:scale-free} \textit{rules out} the cosmological constant term from appearing in the ``marble'' geometry of the field equation.
Does this imply that the proposition's assumptions are in conflict with the observed non-zero value of $\Lambda$?
Both \citet{Aldersley} and \citet{Navarro+Sancho} counsel that it does not insofar as it forces one to place that term on the right-hand side of Eq.~\eqref{eq:standard form}, strongly suggesting that $\Lambda$ be interpreted as a \textit{material} contribution to the equation rather than a \textit{geometrical} one.
This counsel applies equally in two dimensions as it does in four.

The strength of this result clearly depends on condition \ref{item:scale-free}; how should we interpret it?
A homothetic transformation $g_{ab} \mapsto \lambda^2 g_{ab}$, it is often said, ``amounts to a change in the time unit,'' so that condition \ref{item:scale-free} means that $\tilde{G}_{ab}$ (hence $T_{ab}$) is ``independent of [the choice of] the time unit'' \citep[\S6]{Navarro+Sancho}, but such an interpretation must be handled with care.
While it is true that a homothetic transformation effectively multiplies the lengths of all timelike curves by a constant factor, the resulting spacetime
may not necessarily be interpreted as simply the same as before but with different units.
This can occur if the equations relating fields on the spacetime contain dimensional (i.e., not purely numerical) constants that set an absolute scale for dimensional quantities \citep[pp.~372--3]{Aldersley}.
Thus a better description of condition \ref{item:scale-free} is that the field equation is temporally (and, from the constancy of the speed of light, spatially) scale invariant: changes of temporal (or spatial) scale are dynamical symmetries in the sense that they leave the field equation invariant.

In light of this, it is easy to see how the inclusion of the cosmological constant term breaks this symmetry, as a non-zero value thereof introduces a time (and therefore length) scale into the theory.
This is distinct from Newtonian gravitation, which does not have any such absolute scale.
It must be remarked, however, that it is difficult to see how scale invariance could be an \textit{a priori} condition on a theory of gravitation.
Whether it holds seems rather to be an empirical matter.\footnote{
  This is not to say that it is an \textit{implausible} assumption.
  If one believes, in some sense, geometry to be prior to matter, one might argue that the introduction of an absolute length or time scale arises only through the peculiar particularities of specific types of matter, so that in fact $\tilde{G}_{ab}$ should be scale invariant.
  However, it is still hard to see how this argument establishes anything more than the plausibility of condition \ref{item:scale-free}.
}

\subsection{Alternative Field Equations}
\label{sec:AltEq}
What the approaches described in section \ref{sec:lovelock} all have in common are the first two conditions: the form of the field equation \eqref{eq:standard form} and the conservation condition.
It is not difficult to find arguments for the conservation condition (e.g., see \citet[\S17.2]{MTW}).
Yet the assumption that the connection between geometry and matter must take the form prescribed in Eq.~\eqref{eq:standard form} is hardly ever questioned explicitly in the literature.  Why should the relationship between geometry and matter be so simply expressed?

For instance, in four dimensions the usual form of Einstein's equation is equivalent to
\begin{equation}
R_{ab} = 8 \pi \left( T_{ab} - \frac{1}{2} g_{ab} T \right) - \Lambda g_{ab},
\label{eq:einstein-alt}
\end{equation}
where $T=T^a{}_a$ is the trace of the energy-momentum tensor.  To show this, beginning with either equation, one	can take the trace of both sides to establish that $R = -8 \pi T - 4 \Lambda$, and substituting $R$ for $T$ (or vice versa) in the relevant place in the equation.
But this argument involves taking the trace of the metric, the value of which is the dimensionality of spacetime, so the equivalence depends on the assumption that spacetime is four-dimensional.   In any other dimension, including two, it does not hold.

Consequently, one could instead take the form given in Eq.~\eqref{eq:einstein-alt}, instead of the usual form, to be the field equation connecting geometry and matter in two dimensions.  But in this case, Prop.~\ref{EinsteinTensor} still holds, so we have that
\[
\frac{1}{2} R g_{ab} = 8\pi \left( T_{ab} - \frac{1}{2}g_{ab}T \right) - \Lambda g_{ab},
\]
whose trace yields that $R = -2\Lambda$.
Thus, this alternative version of Einstein's equation would lead us to conclude that
\begin{equation}
T_{ab} = \frac{1}{2} T g_{ab}.
\label{eq:energy-alt}
\end{equation}
In this theory, curvature and matter play roles converse from those reached by beginning with the other form of Einstein's equation: instead of the Ricci tensor being proportional to its trace times the metric and the energy-momentum proportional to the cosmological constant times the metric, the energy-momentum is proportional to its trace times the metric and the Ricci tensor is proportional to the cosmological constant times the metric.
In other words, this equation yields a universe of constant curvature whose geometry is totally decoupled from the energy of ponderable matter.

Unlike with the usual form of Einstein's equation, Eq.~\eqref{eq:einstein-alt} does \textit{not} guarantee that the conservation condition $\nabla_a T^{ab} = 0$ holds (except in four dimensions).
If one adds this as a separate field equation, then from Eq.~\eqref{eq:energy-alt} one immediately derives that $\nabla_a T = \mathbf{0}$, i.e., $T$ must be constant.
Thus the conservation condition forces the energy-momentum to take the same form as before, except the role played by the cosmological constant is now played by the (constant) trace of the energy-momentum.
In a sense, the resulting theory allows for strictly \textit{fewer} possibilities than before: with the analog of the original form of Einstein's equation, solutions were parameterized by a single real scalar field (the scalar curvature $R$), whereas with Eq.~\eqref{eq:einstein-alt} and the conservation condition, they are parameterized by a real number (the trace $T$ of the energy-momentum).\footnote{If one allows $\Lambda$ to vary between models, then in both cases one also may parameterize models by values of the cosmological constant.  But that is a real number in both cases, and so it does not change the analysis.}

Is there a different field equation that escapes the form \eqref{eq:standard form} and yet provides possibilities that seem more physically reasonable than the ones considered above in two dimensions?
In fact, the very considerations arising from Prop.~\ref{EinsteinTensor}---on the one hand, the lack of constraint on geometry, and on the other, the severe constraint on matter\footnote{An historical aside: as \citet[p.~72]{Wald} notes, Einstein came to reject an earlier field equation in which $T_{ab} \propto R_{ab}$ precisely because it demanded that both $R$ and $T$ are constant.}---have led physicists, starting with \citet{Teitelboim83,Teitelboim84} and \citet{Jackiw84,Jackiw85}, to propose instead the field equations\footnote{Caution: there is some variation in the literature on the choice of numerical coefficients and whether to exclude the cosmological constant or matter. The presentation here essentially follows that of \citet{Christensen+Mann}.}
\begin{align}
R - \Lambda &= 8\pi T, \label{eq:TJ theory} \\
\nabla_a T^{ab} &= \mathbf{0}.\label{eq:conservation}
\end{align}
Notably, the conservation condition is assumed separately from the field equation, which, though it equates geometry with energy-momentum, does so through scalar quantities instead of (0,2)-tensors.
Unlike higher-dimensional cases, however, in which this would underdetermine both geometric and matter degrees of freedom, in two dimensions the only degree of freedom in the geometry comes through the scalar curvature.

Eq.~\eqref{eq:TJ theory} is not obviously syntactically related to Einstein's equation in four dimensions.
Indeed, as \citet[p.~320]{Boozer} points out, the field equations are analogous to those of Nordstr\"om's 1913 theory \citep[p.~429]{MTW},
\begin{gather*}
R = 24\pi T,\\
C_{abcd} = \mathbf{0},
\end{gather*}
since, as we argued before, the only candidates for the Weyl tensor $C_{abcd}$ in two dimensions must vanish.
What would make Einstein's theory in two dimensions essentially the same as a clearly distinct theory in four?

One can adduce three sorts of arguments.
The first and most commonly argued position is that the \textit{qualitative} similarities between the \textit{solutions} of Eqs.~\eqref{eq:TJ theory} and \eqref{eq:conservation} and the four-dimensional Einsteinian theory are the relevant factors for comparison, not the language in which the equations are described.
Such claimed features include \citep{Sikkema+Mann,Christensen+Mann}:
\begin{itemize}
\item being derivable from a local action principle,
\item a Newtonian limit,
\item Robertson-Walker cosmological solutions,
\item gravitational waves, and
\item the gravitational collapse of dust into a black hole with an event horizon analogous to that of the four-dimensional Schwarzschild solution.
\end{itemize}
Hence, this theory's advocates have (if only implicitly) suggested that it is not the syntactic features of the field equation but qualitative semantic features---those of its models---that provide evidence that it is a relevant analog of general relativity in two dimensions.\footnote{
	\citet{Boozer} provides two ``derivations'' of the theory, one from a two-dimensional Newtonian theory and another using a ``principle of equivalence'' argument assuming that gravitation is represented by a scalar field.
    The former shows a sense in which Eq.~\eqref{eq:TJ theory} is a relativistic analog of the two-dimensional Poisson equation, while the latter results only in an ``effective'' geometry overlaid on an undetectable Minkowski background.
    These are fine as heuristic or motivational arguments, but because of the many assumptions and unforced choices made in the course of their development, they do not provide, in our opinion, evidence beyond those of the qualitative features already established.
}
But this evidence does not extend to any claim of uniqueness.

We have found further two arguments in the literature for some sort of uniqueness, but neither is conclusive.
The first, as developed by \citet{Mann92}, begins by pointing out that the usual Einstein equation in $n$ dimensions is equivalent to the following two equations, representing its trace and trace-free parts, respectively:
\begin{align}
\left(1 - \frac{n}{2} \right) R &= 8\pi G_n T,\\
R_{ab} - \frac{1}{n}R g_{ab} &= 8\pi G_n \left(T_{ab} - \frac{1}{n}T g_{ab} \right),
\end{align}
where Newton's constant $G_n$ is now assumed to depend on $n$.
In particular, if one assumes that $G_n/(1-n/2)$ is well-defined, non-zero, and $n$ can be treated as a continuous parameter, then one can define
\begin{equation}
G'_2 = \lim_{n \to 2} \frac{G_n}{1-n/2}
\end{equation}
as a kind of ``renormalized'' gravitational constant.
In this case the trace equation yields (the cosomological constant-free version of) Eq.~\eqref{eq:2D form}, and the trace-free equation becomes a mathematical identity.

The difficulty with this argument, aside from the mathematically dubious treatment of $n$ as a continuous parameter, is that it is not obvious how to justify that Newton's constant should depend on dimensionality, much less in precisely the way that makes $G'_2$ well-defined and non-zero.
If this dependence is not of a very particular form then one could well arrive at different field equations.\footnote{Note, too, that one would need to take a constant of nature---Newton's constant---to vary with dimension.  Compare with the discussion at the end of section \ref{sec:cosConst}.}

The second argument, described by \citet{Lemos+Sa}, also proposes to derive Eq.~\eqref{eq:TJ theory} from a limiting procedure.
Their argument is that because there is a case for general relativity to be a limiting case of Brans-Dicke theory in dimensions greater than two, whatever theory results from the same limit for two-dimensional Brans-Dicke theory ought to be considered the two-dimensional analog of GR.
Brans-Dicke theory is an alternative gravitational theory involving a scalar field $\phi$, interpreted as a kind of variable gravitational constant, and a new dimensionless constant, the Brans-Dicke constant $\omega$, which mediates the strength of the coupling between the variability of $\phi$ and matter.\footnote{Specifically, $ \Box \phi = [8\pi/(3+2\omega)]T$, where $\Box$ is the Laplace-Beltrami operator.}
There are arguments to the effect that taking $\omega \to \infty$ results in GR, and \citet{Lemos+Sa} suggest to extend this inductively to the two-dimensional case.

However, to achieve this result, the authors must assume that the ``effective'' field $\phi$, cosmological constant $\Lambda'$, and energy-momentum trace $T'$ are actually functions of $\omega$ of the following forms:
\begin{align}
\phi &= \phi_0 + \frac{\varphi}{4\omega} + O(\omega^{-2}),\\
\Lambda' &= \Lambda_0 + \frac{\Lambda}{2\omega} + O(\omega^{-2}),\\
T' &= T_0 - \frac{T}{4e^{2\phi_0}\omega} + O(\omega^{-2}),
\end{align}
where $\phi_0,\Lambda_0,T_0$ are real constants and $\varphi$ is a scalar field.\footnote{We have adjusted the notation somewhat to align it with that used so far.}

Clearly some of the difficulties with this argument are similar to those for the argument described by \citet{Mann92}.
As before, it is difficult to see why the ``renormalized'' field $T'$ and parameter $\Lambda'$ should depend on $\omega$ in the way they must to arrive at the desired conclusion.
These dependencies can be easily changed to arrive at different field equations.
(Unlike in typical applications of running constants or effective fields depending on, say, the energy scale, the parameter $\omega$ is constant: it cannot vary from context to context in a model.)
Moreover, the security of the premises that the $\omega \to \infty$ limit of Brans-Dicke should be always identified as GR, regardless of dimension, presumes dubiously that our evidence for the correct form of that theory is better than for GR.
Finally, besides questioning the cogency of the result in general on conceptual or mathematical grounds \citep{Faraoni,Bhadra+Nandi,Chauvineau}, one simply is not logically compelled to accept the inference from the $n \geq 3$ case to the $n=2$ case.

\subsection{The Empirical Content of Low-Dimensional Gravity}
\label{sec:EmpiricalContent}

The previous subsections described a few different approaches to justifying what, exactly, the field equation(s) in two-dimensional general relativity ought to be.
But we have argued that none of the arguments for uniqueness succeed without questionable premises, and so the question of what, exactly, two-dimensional general relativity is supposed to be seems not yet to have a conclusive answer, if that question is well-posed at all.
If there is any theory which deserves to be called \textit{the} two-dimensional version of GR, there needs to be a principled way in which that theory is supported.
Are there other considerations that might be brought to bear?

As suggested at the beginning of this section, one might try to determine the relevant field equation, or at least constraints on what it could be, through the empirical evidence we have for four-dimensional GR.
What relationship, though, does evidence for the four-dimensional theory have with a two-dimensional theory?
Since it is less obviously tasked with accurate or otherwise successful description, to what criteria must a two-dimensional theory be held?
Such questions are, perhaps surprisingly, rarely addressed in the literature on two-dimensional gravity.
One exception is \citet{Jackiw85}, who concedes that study of two-dimensional gravity probably has only pedagogical value.
His primary interest in low-dimensional gravity, in other words, is as a toy model for the real case of interest, four (or higher) dimensions, in which various calculations and ideas (especially pertaining to quantization) can proceed less encumbered from the complexities that more dimensions introduce.\footnote{
For more on the use of toy models in physics, see \citet{Hartmann1995}, \citet{Marzuoli2008}, \citet{LUCZAK20171}, and \citet{Toy}.}
Such strategies, after all, have been successful for other branches of study, such as condensed matter physics.
Consequently, if one has those sorts of goals in mind in investigating two-dimensional gravity, then the qualitative features of the solutions to Eq.~\eqref{eq:TJ theory} would override considerations coming from Lovelock's theorem and its variants.

\citet[p.~344]{Jackiw85} does consider two further possibilities, which are that lower-dimensional theories could have a kind of duality with higher-dimensional theory, and that lower-dimensional theories could describe the behavior of configurations of matter in four dimensions that are confined to move in fewer dimensions.
He dismisses them as either physically nonsensical or speculative, but this seems too quick to us.
Falling within the former sort of case, at least considered broadly, are situations in which the symmetries of a spacetime reduce the degrees of freedom to those of a two-dimensional model.\footnote{The literature on dimensional reduction is voluminous, not least because of the polysemy of the term. For a small sample of the literature on the sense in which it is used here, see \citet{Cadoni+Mignemi}, \citet{Kiem+Park}, and \citet{Schmidt}.
For a comparison with the method of \citet{Mann92} discussed above, see \citet{Mann+Ross}. }
Falling within the latter are models in which matter is (at least as a sufficiently good idealization) confined to an embedded two-dimensional Lorentzian submanifold.
In either case, evidence for the two-dimensional theory could be inherited from evidence for four-dimensional GR, simply because the two-dimensional theory describes a part of a world like ours.

One difference, however, is that there is no guarantee in these cases that there be a \textit{single} two-dimensional theory that deserves to be called the rightful analog of GR.
Indeed, perhaps we must be prepared to admit that multiple theories of two-dimensional general relativity may be viable.
This plurality of theories need not be problematic, however, as it would be if there were no good reasons to pick amongst the different versions yielding conflicting descriptions of phenomena.
Rather, different two-dimensional theories could apply in different contexts, depending on how that context is related to the more familiar four-dimensional theory for which we have more direct empirical evidence.\footnote{See, however, \citet{Fletcher2017} for arguments for the same pluralistic conclusion for four-dimensional GR. If distinct versions of four-dimensional general relativity are confirmed, the various two-dimensional theories derived from them through dimensional reduction could also be confirmed. In this sense, a plurality of four-dimensional theories does not undermine the empirical basis for a plurality of two-dimensional theories.}
A thorough investigation of what these theories can be, and the contexts in which they arise, must be left to future research.

\section{Conclusion}

We have described various features of general relativity in two dimensions, on several different ways of understanding what that theory should consist in.  As we have argued, general relativity in two dimensions is strikingly different from the theory in four dimensions.  On one natural way of understanding the two-dimensional theory, several of the characteristic features of the four-dimensional theory---such as the existence of an initial-value formulation, a well-defined Newtonian limit, and a dynamical dependence of spacetime on the presence of matter---do not appear to hold in two dimensions.  Alternative versions of the two-dimensional theory, meanwhile, eschew Einstein's equation.  Given this, it seems one needs to either qualify one's assertions concerning the features of general relativity, or else conclude that there is some sense in which general relativity requires spacetime to have four dimensions (or at least, have dimension greater than two).  As we argue in section \ref{sec:cosConst}, if one adopts the first option, there are consequences for what value the cosmological constant can take---at least if one adopts the view that the cosmological constant is a constant of nature, in the sense described by \citet{EarmanCC}.

We conclude by observing that the sort of extended reflection on a physical theory in other dimensions as presented here---which, we emphasize, is hardly unusual in the physics literature---raises important questions for philosophers of physics (and others) concerning how we understand the space of physical possibilities.  Briefly, many philosophers of physics would like to take the space of models of our physical theories as characterizing a space of ``physically possible worlds'' \citep[Ch.~2]{RuetscheIQT}.  We do not wish to take a stand on whether this is always the best way of understanding the notion of possibility captured by physical theorizing.
When one uses theory to describe physical toy models, for example, those models are not intended to describe  physical possibilities directly so much as to serve as tools for analogical reasoning about physical possibilities, among other uses; as discussed in section \ref{sec:EmpiricalContent}, this is the explicit attitude of \citet{Jackiw85} towards two-dimensional gravitational models.

But if one does think that physical theorizing is intended to capture the space of physical possibilities, then it seems that there is a certain indeterminacy in what this space consists in.
In particular, either it is possible that the world had a different number of dimensions than we observe, or not.  In the former case, the space of physically possible worlds presumably includes worlds with different numbers of dimensions.  But then, one might think that there must be some fact about the laws in those worlds.  Can we know what the laws are?  Should we be committed to the idea that the laws in other dimensions take the same form as in four dimensions, or should other considerations, such as we discuss in section \ref{sec:alternatives}, enter into our deliberations?  Do constants of nature vary among these possibilities, and if so, do they vary only when one changes dimension?  Should we be satisfied if, having chosen some way of generalizing some theory, the possibilities in other dimensions turn out to be qualitatively different from those in four dimensions?

In the latter case, meanwhile, where it is \emph{not} possible for the world to have been two-dimensional, it seems that there can be no fact of the matter about whether analyses of models with other dimensions (as happens in both quantum gravity and mathematical quantum field theory) track anything about the world at all, because there are no physical possibilities to which these models correspond.

\section*{Acknowledgments}

This paper is partially based upon work supported by the John Templeton Foundation grant ``Laws, Methods, and Minds in Cosmology".  It was originally conceived when three of us (Fletcher, Manchak, and Weatherall) were students in David Malament's general relativity course, in a previous Aeon.  We are grateful to David for many discussions related to this material, and to Brian Pitts and two anonymous referees for helpful comments on a previous version.  Weatherall wrote a substantial portion of this paper while a Visiting Fellow at the Australian National University; he is grateful for their support.  Fletcher acknowledges the support of the European Commission through a Marie Curie Fellowship (PIIF-GA-2013-628533) during the writing of this paper.  We are grateful to Brian Pitts and two anonymous referees for helpful comments on a previous draft.


\begin{thebibliography}{82}
\expandafter\ifx\csname natexlab\endcsname\relax\def\natexlab#1{#1}\fi
\expandafter\ifx\csname url\endcsname\relax
  \def\url#1{\texttt{#1}}\fi
\expandafter\ifx\csname urlprefix\endcsname\relax\def\urlprefix{URL }\fi

\bibitem[{{Aldersley}(1977)}]{Aldersley}
{Aldersley}, S.~J., 1977. {Dimensional analysis in relativistic gravitational
  theories}. Physical Review D 15, 370--376.

\bibitem[{Bhadra and Nandi(2001)}]{Bhadra+Nandi}
Bhadra, A., Nandi, K.~K., 2001. $\ensuremath{\omega}$ dependence of the scalar
  field in {B}rans-{D}icke theory. Physical Review D 64, 087501.

\bibitem[{Bianchi and Rovelli(2010)}]{Bianchi+Rovelli}
Bianchi, E., Rovelli, C., 2010. Why all these prejudices against a constant?,
  arXiv:1002.3966.

\bibitem[{Boozer(2008)}]{Boozer}
Boozer, A.~D., 2008. General relativity in (1+1) dimensions. European Journal
  of Physics 29, 319--333.

\bibitem[{Brown(2005)}]{BrownPR}
Brown, H.~R., 2005. Physical Relativity. Oxford University Press, New York.

\bibitem[{Brunetti et~al.(2012)Brunetti, Fredenhagen, and
  Ribeiro}]{Fredenhagen+Brunetti}
Brunetti, R., Fredenhagen, K., Ribeiro, P.~L., 2012. Algebraic structure of
  classical field theory {I}: Kinematics and linearized dynamics for real
  scalar fields, arXiv:1209.2148.

\bibitem[{Cadoni and Mignemi(1995)}]{Cadoni+Mignemi}
Cadoni, M., Mignemi, S., 1995. Nonsingular four-dimensional black holes and the
  {J}ackiw-{T}eitelboim theory. Physical Review D 51, 4319--4329.

\bibitem[{Callender(2005)}]{CallenderDims}
Callender, C., 2005. Answers in search of a question: `proofs' of the
  tri-dimensionality of space. Studies in History and Philosophy of Modern
  Physics 36~(1), 113--136.

\bibitem[{Carlip(2003)}]{Carlip}
Carlip, S., 2003. Quantum gravity in 2+ 1 dimensions. Cambridge University
  Press, Cambridge, UK.

\bibitem[{Cartan(1922)}]{Cartan1922}
Cartan, E., 1922. Sur les \'equations de la gravitation d'{E}instein. Journal
  de Math\'ematiques Pures et Appliqu\'ees 1, 141--204.

\bibitem[{Cartan(1923)}]{Cartan1}
Cartan, E., 1923. Sur les vari\'et\'es \`a connexion affine, et la th\'eorie de
  la relativit\'e g\'en\'eralis\'ee (premi\`ere partie). Annales scientifiques
  de l'\'Ecole Normale Sup\'erieure 40, 325--412.

\bibitem[{Cartan(1924)}]{Cartan2}
Cartan, E., 1924. Sur les vari\'et\'es \`a connexion affine, et la th\'eorie de
  la relativit\'e g\'en\'eralis\'ee (premi\`ere partie) (suite). Annales
  scientifiques de l'\'Ecole Normale Sup\'erieure 41, 1--25.

\bibitem[{Chauvineau(2003)}]{Chauvineau}
Chauvineau, B., 2003. On the limit of {B}rans-{D}icke theory when $\omega \to
  \infty$. Classical and Quantum Gravity 20~(13), 2617.

\bibitem[{Choquet-Bruhat and Geroch(1969)}]{CB+Geroch}
Choquet-Bruhat, Y., Geroch, R., 1969. Global aspects of the cauchy problem in
  general relativity. Communications in Mathematical Physics 14~(4), 329--335.

\bibitem[{Christensen and Mann(1992)}]{Christensen+Mann}
Christensen, D., Mann, R.~B., 1992. The causal structure of two-dimensional
  spacetimes. Classical and Quantum Gravity 9, 1769--1786.

\bibitem[{Collas(1977)}]{Collas}
Collas, P., 1977. General relativity in two- and three-dimensional space-times.
  American Journal of Physics 45~(9), 833--837.

\bibitem[{Curiel(2017)}]{CurielEC}
Curiel, E., 2017. A primer on energy conditions. In: Lehmkuhl, D., Schiemann,
  G., Scholz, E. (Eds.), Towards a Theory of Spacetime Theories. Birkh\"auser,
  Boston, MA, pp. 43--104.

\bibitem[{Deser(1996)}]{Deser1996}
Deser, S., May 1996. Inequivalence of first- and second-order formulations in
  d=2 gravity models. Foundations of Physics 26~(5), 617--621.
\newline\urlprefix\url{https://doi.org/10.1007/BF02058235}

\bibitem[{Dewar and Weatherall(2017)}]{Dewar+Weatherall}
Dewar, N., Weatherall, J.~O., 2017. On gravitational energy in {N}ewtonian
  theories, arXiv:1707.00563 [physics.hist-ph].

\bibitem[{Earman(1986)}]{EarmanPD}
Earman, J., 1986. A Primer on Determinism. Springer Science \& Business Media,
  Dordrecht.

\bibitem[{Earman(1995)}]{Earman}
Earman, J., 1995. Bangs, crunches, whimpers, and shrieks: Singularities and
  acausalities in relativistic spacetimes. Oxford University Press, Oxford, UK.

\bibitem[{Earman(2003)}]{EarmanCC}
Earman, J., 2003. The cosmological constant, the fate of the universe,
  unimodular gravity, and all that. Studies in History and Philosophy of Modern
  Physics 34~(4), 559--577.

\bibitem[{Ehlers(1997)}]{EhlersLimit2}
Ehlers, J., 1997. Examples of {N}ewtonian limits of relativistic spacetimes.
  Classical and Quantum Gravity 14, A119--A126.

\bibitem[{Faraoni(1999)}]{Faraoni}
Faraoni, V., 1999. Illusions of general relativity in {B}rans-{D}icke gravity.
  Physical Review D 59, 084021.

\bibitem[{Feintzeig(2016{\natexlab{a}})}]{FeintzeigAlgebra}
Feintzeig, B., 2016{\natexlab{a}}. On theory construction in physics:
  Continuity from classical to quantum. Erkenntnis, 1--16.

\bibitem[{Feintzeig(2016{\natexlab{b}})}]{FeintzeigPOs}
Feintzeig, B., 2016{\natexlab{b}}. Toward an understanding of parochial
  observables. The British Journal for the Philosophy of Science, axw010.

\bibitem[{Feintzeig(2016{\natexlab{c}})}]{FeintzeigUI}
Feintzeig, B., 2016{\natexlab{c}}. Unitary inequivalence in classical systems.
  Synthese 193~(9), 2685--2705.

\bibitem[{Feintzeig et~al.(2017)Feintzeig, (Le)Manchak, Rosenstock, and
  Weatherall}]{FLRW}
Feintzeig, B., (Le)Manchak, J.~B., Rosenstock, S., Weatherall, J.~O., 2017. Why
  be regular?, unpublished manuscript.

\bibitem[{Fletcher(2014)}]{FletcherThesis}
Fletcher, S.~C., 2014. Similarity and spacetime: Studies in intertheoretic
  reduction and physical significance. Ph.D. thesis, University of California,
  Irvine, Irvine, CA.

\bibitem[{Fletcher(2017)}]{Fletcher2017}
Fletcher, S.~C., 2017. Indeterminism, gravitation, and spacetime theory. In:
  Hofer-Szab{\'o}, G., Wro{\'{n}}ski, L. (Eds.), Making it Formally Explicit:
  Probability, Causality and Indeterminism. Springer International Publishing,
  Cham, pp. 179--191.

\bibitem[{Friedman(1983)}]{Friedman}
Friedman, M., 1983. Foundations of Space-Time Theories: Relativistic Physics
  and Philosophy of Science. Princeton University Press, Princeton, NJ.

\bibitem[{Friedrichs(1927)}]{Friedrichs}
Friedrichs, K.~O., 1927. Eine invariante {F}ormulierung des {N}ewtonschen
  {G}ravitationsgesetzes und der {G}renz\"{u}berganges vom {E}insteinschen zum
  {N}ewtonschen {G}esetz. Mathematische Annalen 98, 566--575.

\bibitem[{Geroch(1977)}]{GerochPrediction}
Geroch, R., 1977. Prediction in general relativity. In: Earman, J., Glymour,
  C., Stachel, J. (Eds.), Foundations of space-time theories. University of
  Minnesota Press, Minneapolis, pp. 81--93.

\bibitem[{Geroch and Horowitz(1979)}]{Geroch+Horowitz}
Geroch, R., Horowitz, G.~T., 1979. Global structure of spacetimes. In: Penrose,
  R., Hawking, S.~W., Israel, W. (Eds.), General Relativity: an Einstein
  Centenary Survey. Cambridge University Press, Cambridge, UK, pp. 212--293.

\bibitem[{Glimm and Jaffe(1987)}]{Glimm+Jaffe}
Glimm, J., Jaffe, A., 1987. Quantum physics: a functional integral point of
  view. Springer Science \& Business Media, New York, NY.

\bibitem[{Hancox-Li(2017)}]{Hancox-Li}
Hancox-Li, L., 2017. Solutions in constructive field theory. Philosophy of
  Science 84~(2), 335--358.

\bibitem[{Hartmann(1995)}]{Hartmann1995}
Hartmann, S., 1995. Models as a tool for theory construction: Some strategies
  of preliminary physics. In: Herfel, W., Krajewski, W., Niiniluoto, I.,
  Wojcicki, R. (Eds.), Theories and Models in Scientific Process. Rodopi,
  Amsterdam, pp. 49--67.

\bibitem[{Hawking and Ellis(1973)}]{Hawking+Ellis}
Hawking, S.~W., Ellis, G. F.~R., 1973. The Large Scale Structure of Space-time.
  Cambridge University Press, New York.

\bibitem[{Hawking and Penrose(1970)}]{Hawking+Penrose}
Hawking, S.~W., Penrose, R., 1970. The singularities of gravitational collapse
  and cosmology. Proceedings of the Royal Society of London A: Mathematical,
  Physical and Engineering Sciences 314~(1519), 529--548.

\bibitem[{Jackiw(1984)}]{Jackiw84}
Jackiw, R., 1984. Liouville field theory: A two-dimensional model for gravity?
  In: Christensen, S.~M. (Ed.), Quantum Theory of Gravity: Essays in Honor of
  the 60th Birthday of Bryce S. DeWitt. Adam Hilger, Bristol, pp. 403--420.

\bibitem[{Jackiw(1985)}]{Jackiw85}
Jackiw, R., 1985. Lower dimensional gravity. Nuclear Physics B 252, 343--356.

\bibitem[{Kiem and Park(1996)}]{Kiem+Park}
Kiem, Y., Park, D., 1996. General static solutions of a massless scalar field
  coupled with a class of gravity theories. Physical Review D 53, 747--753.

\bibitem[{Knox(2011)}]{KnoxFF}
Knox, E., 2011. {N}ewton-{C}artan theory and teleparallel gravity: The force of
  a formulation. Studies in History and Philosophy of Modern Physics 42~(4),
  264--275.

\bibitem[{Knox(2013)}]{KnoxEmergence}
Knox, E., 2013. Effective spacetime geometry. Studies in History and Philosophy
  of Modern Physics 44~(3), 346--356.

\bibitem[{K\"{u}nzle(1976)}]{Kunzle}
K\"{u}nzle, H.~P., 1976. Covariant {N}ewtonian limit of {L}orentz space-times.
  General Relativity and Gravitation 7~(5), 445--457.

\bibitem[{Landsman(2017)}]{LandsmanBohr}
Landsman, K., 2017. Foundations of Quantum Theory: From Classical Concepts to
  Operator Algebras. Springer, Cham, Switzerland.

\bibitem[{Landsman(1998)}]{LandsmanMT}
Landsman, N.~P., 1998. Mathematical topics between classical and quantum
  mechanics. Springer Science \& Business Media, New York, NY.

\bibitem[{Lemos and S\'a(1994)}]{Lemos+Sa}
Lemos, J. P.~S., S\'a, P.~M., 1994. The two-dimensional analogue of general
  relativity. Classical and Quantum Gravity 11, L11--L14.

\bibitem[{Lovelock(1971)}]{Lovelock71}
Lovelock, D., 1971. The {E}instein tensor and its generalizations. Journal of
  Mathematical Physics 12~(3), 498--501.

\bibitem[{Lovelock(1972)}]{Lovelock72}
Lovelock, D., 1972. The four-dimensionality of space and the {E}instein tensor.
  Journal of Mathematical Physics 13~(6), 874--876.

\bibitem[{Luczak(2017)}]{LUCZAK20171}
Luczak, J., 2017. Talk about toy models. Studies in History and Philosophy of
  Modern Physics 57~(Supplement C), 1 -- 7.

\bibitem[{Malament(1986)}]{MalamentLimit}
Malament, D., 1986. {N}ewtonian gravity, limits, and the geometry of space. In:
  Colodny, R. (Ed.), From Quarks to Quasars. University of Pittsburgh Press,
  Pittsburgh, pp. 181--202.

\bibitem[{Malament(2012)}]{MalamentGR}
Malament, D.~B., 2012. Topics in the Foundations of General Relativity and
  {N}ewtonian Gravitation Theory. University of Chicago Press, Chicago.

\bibitem[{Manchak(2011)}]{Manchak2011}
Manchak, J.~B., 2011. What is a physically reasonable space-time? Philosophy of
  Science 78~(3), 410--420.

\bibitem[{Manchak(2014)}]{Manchak2014}
Manchak, J.~B., 2014. On space-time singularities, holes, and extensions.
  Philosophy of Science 81~(5), 1066--1076.

\bibitem[{Mann(1992)}]{Mann92}
Mann, R.~B., 1992. Lower dimensional black holes. General Relativity and
  Gravitation 24~(4), 433--449.

\bibitem[{Mann and Ross(1993)}]{Mann+Ross}
Mann, R.~B., Ross, S.~F., 1993. The {D} to 2 limit of general relativity.
  Classical and Quantum Gravity 10~(7), 1405.

\bibitem[{Marzuoli(2008)}]{Marzuoli2008}
Marzuoli, A., 2008. Toy models in physics and the reasonable effectiveness of
  mathematics. In: Lupacchini, R., Corsi, G. (Eds.), Deduction, Computation,
  Experiment: Exploring the Effectiveness of Proof. Springer Milan, Milano, pp.
  49--64.

\bibitem[{Misner et~al.(1973)Misner, Thorne, and Wheeler}]{MTW}
Misner, C.~W., Thorne, K.~S., Wheeler, J.~A., 1973. Gravitation. W. H. Freeman,
  San Francisco.

\bibitem[{Navarro and Navarro(2011)}]{Navarro+Navarro}
Navarro, A., Navarro, J., 2011. Lovelock's theorem revisited. Journal of
  Geometry and Physics 61~(10), 1950--1956.

\bibitem[{Navarro(2014)}]{Navarro14}
Navarro, J., 2014. On second-order, divergence-free tensors. Journal of
  Mathematical Physics 55~(6), 062501.

\bibitem[{Navarro and Sancho(2008)}]{Navarro+Sancho}
Navarro, J., Sancho, J.~B., 2008. On the naturalness of {E}instein's equation.
  Journal of Geometry and Physics 58~(8), 1007--1014.

\bibitem[{Penrose(1979)}]{Penrose}
Penrose, R., 1979. Singularities and time-asymmetry. In: Penrose, R., Hawking,
  S.~W., Israel, W. (Eds.), General Relativity: an Einstein Centenary Survey.
  Cambridge University Press, Cambridge, UK, pp. 581--638.

\bibitem[{Pitts(2016)}]{Pitts}
Pitts, J.~B., 2016. Space-time philosophy reconstructed via massive
  {N}ordstr\"om scalar gravities? Studies in History and Philosophy of Modern
  Physics 53, 73--–92.

\bibitem[{Rejzner(2016)}]{Rejzner}
Rejzner, K., 2016. Perturbative Algebraic Quantum Field Theory. Spriner, Cham,
  Switzerland.

\bibitem[{Reutlinger et~al.(2017)Reutlinger, Hangleiter, and Hartmann}]{Toy}
Reutlinger, A., Hangleiter, D., Hartmann, S., 2017. Understanding (with) toy
  models. The British Journal for the Philosophy of Science, axx005.

\bibitem[{Ruetsche(2011)}]{RuetscheIQT}
Ruetsche, L., 2011. Interpreting Quantum Theories. Oxford University Press,
  Oxford, UK.

\bibitem[{Schmidt(1999)}]{Schmidt}
Schmidt, H.-J., 1999. The classical solutions of two-dimensional gravity.
  General Relativity and Gravitation 31, 1187--1210.

\bibitem[{Senovilla(1998)}]{Senovilla}
Senovilla, J.~M., 1998. Singularity theorems and their consequences. General
  Relativity and Gravitation 30~(5), 701--848.

\bibitem[{Sikkema and Mann(1991)}]{Sikkema+Mann}
Sikkema, A.~E., Mann, R.~B., 1991. Gravitation and cosmology in (1+1)
  dimensions. Classical and Quantum Gravity 8~(1), 219.

\bibitem[{Teitelboim(1983)}]{Teitelboim83}
Teitelboim, C., 1983. Gravitational and {H}amiltonian structure in two
  spacetime dimensions. Physics Letters B 126~(1,2), 41--45.

\bibitem[{Teitelboim(1984)}]{Teitelboim84}
Teitelboim, C., 1984. The {H}amiltonian structure of two-dimensional space-time
  and its relation with the conformal anomaly. In: Christensen, S.~M. (Ed.),
  Quantum Theory of Gravity: Essays in Honor of the 60th Birthday of Bryce S.
  DeWitt. Adam Hilger, Bristol, pp. 327--344.

\bibitem[{Trautman(1965)}]{Trautman}
Trautman, A., 1965. Foundations and current problems of general relativity. In:
  Deser, S., Ford, K.~W. (Eds.), Lectures on General Relativity. Prentice-Hall,
  Englewood Cliffs, NJ, pp. 1--248.

\bibitem[{Vermeil(1917)}]{Vermeil1917}
Vermeil, H., 1917. Notiz \"uber das mittlere {K}r\"ummungsmaß einer n-fach
  ausgedehnten {R}iemann'schen {M}annigfaltigkeit. Nachrichten von der
  Gesellschaft der Wissenschaften zu Göttingen, Mathematisch-Physikalische
  Klasse 1917, 334--344.

\bibitem[{Wald(1984)}]{Wald}
Wald, R.~M., 1984. General Relativity. University of Chicago Press, Chicago.

\bibitem[{Weatherall(2011)}]{WeatherallSGP}
Weatherall, J.~O., 2011. On the status of the geodesic principle in {N}ewtonian
  and relativistic physics. Studies in History and Philosophy of Modern Physics
  42~(4), 276--281.

\bibitem[{Weatherall(2014)}]{WeatherallSingularity}
Weatherall, J.~O., 2014. What is a singularity in geometrized {N}ewtonian
  gravitation? Philosophy of Science 81~(5), 1077--1089.

\bibitem[{Weatherall(2017{\natexlab{a}})}]{WeatherallConservation}
Weatherall, J.~O., 2017{\natexlab{a}}. Conservation, inertia, and spacetime
  geometry, forthcoming in Studies in History and Philosophy of Modern Physics;
  arXiv:1702.01642 [physics.hist-ph].

\bibitem[{Weatherall(2017{\natexlab{b}})}]{WeatherallPuzzleball}
Weatherall, J.~O., 2017{\natexlab{b}}. Inertial motion, explanation, and the
  foundations of classical space-time theories. In: Lehmkuhl, D., Schiemann,
  G., Scholz, E. (Eds.), Towards a Theory of Spacetime Theories. Birkh\"auser,
  Boston, MA, pp. 13--42.

\bibitem[{Weatherall and Manchak(2014)}]{Manchak+Weatherall}
Weatherall, J.~O., Manchak, J.~B., 2014. The geometry of conventionality.
  Philosophy of Science 81~(2), 233--247.

\bibitem[{Weyl(1922)}]{Weyl22}
Weyl, H., 1922. Space--Time--Matter. Dover, New York.

\bibitem[{Weyl(1950)}]{Weyl}
Weyl, H., 1950. The Theory of Groups and Quantum Mechanics. Dover, Mineola, NY.

\end{thebibliography}
\end{document}